\documentclass{article} 
\usepackage{preprint_style,times}
\usepackage{dsfont}
\usepackage{algorithm}
\usepackage{algpseudocode}  
\usepackage{wrapfig}
\usepackage{float}          
\usepackage[utf8]{inputenc} 
\usepackage[T1]{fontenc}    
\usepackage{hyperref}       
\usepackage{url}            
\usepackage{booktabs}       
\usepackage{enumitem}

\usepackage{amsfonts}       
\usepackage{nicefrac}       
\usepackage{microtype}      
\usepackage[table]{xcolor}         
\usepackage{xspace}
\usepackage{lineno}

\usepackage{mathtools}
\usepackage{todonotes}
\usepackage{amssymb,fge}
\usepackage{thm-restate}
\usepackage{wrapfig}
\usepackage{bbm}
\usepackage{mathrsfs}
\usepackage{soul}
\usepackage{array}
\usepackage{multirow}
\usepackage{algorithm}
\usepackage[commentColor=black,beginLComment=/*~, endLComment=~*/]{algpseudocodex}
\usepackage{subcaption}
\usepackage[font=small]{caption}
\usepackage{pifont}
\usepackage{tikz-3dplot}
\usepackage{pgfplots}
\pgfplotsset{compat=newest}
\usepackage{tikzscale}
\usepackage{comment} 
\PassOptionsToPackage{final,pagebackref=true}{hyperref}
\usepackage{hyperref} 
\usepackage{cleveref}
\usepackage{wasysym}
\usepackage{acronym}
\usepackage{xfrac}
\usepackage{multibib}
\usepackage{float}

\newcommand*{\backrefalt}[4]{%
    \ifcase #1 \footnotesize{(Not cited.)}%
    \or        \footnotesize{(Cited on page~#2)}%
    \else      \footnotesize{(Cited on pages~#2)}%
    \fi}
\newcolumntype{P}[1]{>{\centering\arraybackslash}p{#1}}

\hypersetup{
	colorlinks=true,       
	linkcolor=blue,        
	citecolor=blue,        
	filecolor=magenta,     
	urlcolor=blue         
}

\usepackage{amsmath, amsthm, amsfonts, bm}
\usepackage{mathtools}
\usepackage{thmtools}
\usepackage{thm-restate}
\usepackage{xcolor}
\usepackage{xspace}
\usepackage[framemethod=TikZ]{mdframed}

\crefname{nfmdefinition}{Definition}{Definitions}

\def\1{\bm{1}}

\DeclareMathAlphabet{\mathsfit}{\encodingdefault}{\sfdefault}{m}{sl}
\SetMathAlphabet{\mathsfit}{bold}{\encodingdefault}{\sfdefault}{bx}{n}

\def\gA{{\mathcal{A}}}
\def\gB{{\mathcal{B}}}

\def\gD{{\mathcal{D}}}

\def\gF{{\mathcal{F}}}

\def\gI{{\mathcal{I}}}

\def\gR{{\mathcal{R}}}

\def\gV{{\mathcal{V}}}

\def\gX{{\mathcal{X}}}

\newcommand{\R}{\mathbb{R}}

\newcommand{\xhdr}[1]{{\noindent\bfseries #1}.}
\newcommand{\cut}[1]{}

\newcommand{\namelong}{\textsc{Systemic Adversarial Robustness}}
\newcommand{\nameshort}{\textsc{SAR}\xspace}

\mdfdefinestyle{MyFrame}{%
    linecolor=black,
    outerlinewidth=.3pt,
    roundcorner=5pt,
    innertopmargin=1pt,
    innerbottommargin=1pt,
    innerrightmargin=1pt,
    innerleftmargin=1pt,
    backgroundcolor=black!0!white}

\mdfdefinestyle{MyFrame2}{%
    linecolor=white,
    outerlinewidth=1pt,
    roundcorner=1pt,
    innertopmargin=3pt,
    innerbottommargin=2pt,
    innerrightmargin=7pt,
    innerleftmargin=7pt,
    backgroundcolor=black!3!white}

\mdfdefinestyle{MyFrameEq}{%
    linecolor=white,
    outerlinewidth=0pt,
    roundcorner=0pt,
    innertopmargin=0pt,
    innerbottommargin=0pt,
    innerrightmargin=7pt,
    innerleftmargin=7pt,
    backgroundcolor=black!3!white}

\mdfdefinestyle{nfmdefinitionstyle}{
    backgroundcolor=black!3,
    linecolor=black!35,
    linewidth=0.5pt,
    roundcorner=0pt,
    innerleftmargin=9pt,
    innerrightmargin=9pt,
    innertopmargin=6pt,
    innerbottommargin=6pt,
    skipabove=10pt,
    skipbelow=10pt
}

\theoremstyle{definition}
\newmdtheoremenv[
    style=nfmdefinitionstyle
]{nfmdefinition}{Definition}
\theoremstyle{plain}

\title{Exposing Vulnerability Monoculture Through Systemic Adversarial Robustness}

\author{%
Leo Elmecker-Plakolm\textsuperscript{1}, Lukasz Szpruch\textsuperscript{2,3}, Avishek Joey Bose\textsuperscript{1,4}\thanks{Equal advising}, Matthew Wicker\textsuperscript{1}\footnotemark[1] \\
\normalfont\textsuperscript{1}Imperial College London, \textsuperscript{2}The University of Edinburgh, \\ \textsuperscript{3}The Alan Turing Institute, \textsuperscript{4}Mila - Québec AI Institute \\[4pt]
}

\iclrfinalcopy 
\begin{document}

\maketitle
{\renewcommand{\thefootnote}{}\footnotetext{Correspondence: \texttt{\{leo.elmecker-plakolm24, joey.bose, mwicker\}@imperial.ac.uk}}
\begin{abstract}

\looseness=-1

\looseness=-1
Beneath the apparent diversity of foundation models lies an increasingly homogeneous technical stack of common training pipelines, distillation methods, and data. 
These shared algorithmic choices can expose an otherwise distinct model ecosystem to concentrated risks leading to a \emph{vulnerability monoculture}. Yet current practice in assessing adversarial robustness is restricted to individual robustness scores and attack transfer rates, leaving open the unexplored fundamental question: how many attacks suffice for catastrophic widespread compromise? In this paper, we investigate the empirical nature of shared vulnerabilities within a model pool and introduce \emph{Systemic Adversarial Robustness} (\nameshort), a novel measure of shared vulnerability of a model pool through the attack diversity needed to exploit it. Specifically, \nameshort interpolates between single-attack transferability and independent per-model robustness: small values reveal a vulnerability monoculture, whereas large values indicate that failures are dispersed across models. We theoretically formulate \nameshort as a partial set-cover problem with provable approximation guarantees---enabling the development of scalable algorithms to empirically measure the systemic adversarial robustness. We empirically test the shared vulnerability of vision and large language models and characterize their shared collective vulnerability curve. Specifically, we demonstrate that model pools distilled from DeepSeek-R1 and Claude Opus 4.6 exhibit lower estimated systemic robustness than their corresponding parent pools, revealing a potential systemic cost of frontier-model distillation. Our framework seeks to establish vulnerability monoculture as a measurable target for a more wholistic adversarial evaluation of model populations. 

\end{abstract}

\section{Introduction}
\label{sec:introduction}

\looseness=-1
Diversity is one of the oldest defenses against systemic failure. In nature, genetic diversity protects populations against extinction from deadly viruses, and a computer network running identical software can fall prey to a single exploit~\citep{geer2003cyberinsecurity}. As a result, the danger of fostering a monoculture lies not in any individual member being weak, but rather in all members in the population sharing the same weakness. This intuition applies in several facets of real-world life, but also in algorithmic decision-making as \emph{algorithmic monoculture}~\citep{algorithmicmonoculture}, which underscores the paradox: when many decision-makers converge on the same algorithm, the collective can be worse off even if the algorithm is better for each of them in isolation. The same effects of algorithmic monoculture are now, unfortunately, measurable in machine learning through the lens of \emph{outcome homogenization}, wherein foundation models that are trained on common data and methodology make correlated errors in an ecosystem~\citep{li2025quantifyingcorrelationsmachinelearning,bommasani2021opportunities,toups2023ecosystem}.

\looseness=-1
As foundation models grow stronger in capabilities, the risk and potential damage of correlated errors are amplified~\citep{goel2025great,kim2025correlated}. This growth in model capabilities is also coupled with the Cambrian explosion of new applications for foundation models and agentic systems that touch a diversity of economic sectors, including safety-critical domains such as healthcare, autonomous driving, and finance~\citep{bommasani2021opportunities, szpruch2025insuring, yutysever2020surveryofautonomousdriving}. However, a diverse foundation model ecosystem need not have diverse vulnerabilities. Powering this vast foundation model zoo is a deceptively narrow technical stack of overlapping pre-training data, near-identical architectural classes, and shared post-training recipes, and, more recently, distillation methods from which student models inherit capabilities. This convergence to homogenization runs the risk of breeding a new \emph{vulnerability monoculture} where the exploitable defects of larger foundation models are silently propagated to every student model distilled from it. 

\looseness=-1
Indeed, the widespread adoption of distillation as a standard stage in post-training only serves to accelerate the convergence of shared vulnerabilities. For instance, open models are routinely fine-tuned on the outputs of a few frontier teachers~\citep{xu2024survey}. For example, DeepSeek-R1~\citep{guo2025deepseek} was distilled by its developers into Qwen~\citep{bai2023qwen} and Llama models~\citep{grattafiori2024llama}, and community checkpoints of open models from five developers have been post-trained on outputs attributed to Claude Opus~4.5. This surface area for shared vulnerabilities remains opaque, as distillation can transmit traits that are invisible in the training data itself~\citep{cloud2025subliminal}. 

\looseness=-1
Despite the increasing risk of shared vulnerabilities, current adversarial robustness evaluations fundamentally cannot distinguish a vulnerability monoculture from a genuinely diverse pool. In particular, individual attack success rates~\citep{carlini2017towards}
assume adversaries that attack every model separately, and transferability assumes adversaries who attack with a single input. Importantly, real-world adversaries operate in between and may seek to maximize the attack surface exposed through shared vulnerabilities. At present, the quantity that governs this threat model lacks a prescriptive, systematic measure: \emph{how many distinct attacks does it take to compromise a pool of models?}

\looseness=-1
\xhdr{Present work} In this work, we investigate an adversarial manifestation of the vulnerability monoculture in foundation models and seek to quantify the shared vulnerabilities.
To expose this, we introduce and define the notion of \emph{\namelong} (\nameshort) that aims to measure a new notion of model diversity, not through how different models are constructed but rather in how \emph{differently} models can be broken. Specifically, \nameshort measures the minimum number of distinct attacks, within a budget, that together compromise at least a $1-\delta$ fraction of the model pool. Critically, sweeping $\delta$ traces the model pool's full vulnerability profile, which complements any existing adversarial robustness measure. Interpreting this vulnerability profile also provides a signature wherein a low \nameshort corroborates high shared risk, as depicted in~\cref{fig:titlefig} for distillation of foundation models.

\looseness=-1
We summarize our contributions as follows:
\begin{enumerate}[topsep=0pt, partopsep=0pt, itemsep=0pt, parsep=0pt, leftmargin=*]
    \item \looseness=-1 \xhdr{Systemic adversarial robustness} We introduce and define the concept of \textit{systemic adversarial robustness} as a new form of risk assessment across an entire pool of models. Specifically, we aim to find the smallest number of perturbation vectors that manage to \textit{fool} a given proportion of models in a fixed (distilled) model pool (\S\ref{sec:systemic_adversarial_robustness}).
    \item \looseness=-1 \xhdr{Algorithms} We formulate \nameshort as a partial set-cover problem, and give a top-down greedy estimator in~\cref{alg:top_down} that requires only a joint attack oracle (\S\ref{sec:algorithms}).
    With an $\alpha$-approximate oracle, the estimate lies within a factor $1+\alpha^{-1}\ln\lceil(1-\delta)M\rceil$ of the true value, a logarithmic dependence that cannot be improved in general. Importantly, the procedure attains a $(1-e^{-\alpha})$ fraction of the best coverage achievable with $k$ attacks. 
    \item \looseness=-1 \xhdr{Empirical analysis} We measure \nameshort on a DeepSeek-R1 distilled model pool and community checkpoints post-trained on outputs attributed to Claude Opus~4.5 (\S\ref{sec:experiments}). Our investigation points to increasing evidence of an emerging vulnerability monoculture, as shown by measuring \nameshort.

\end{enumerate}

\begin{figure*}
    \centering
    \includegraphics[width=\textwidth]{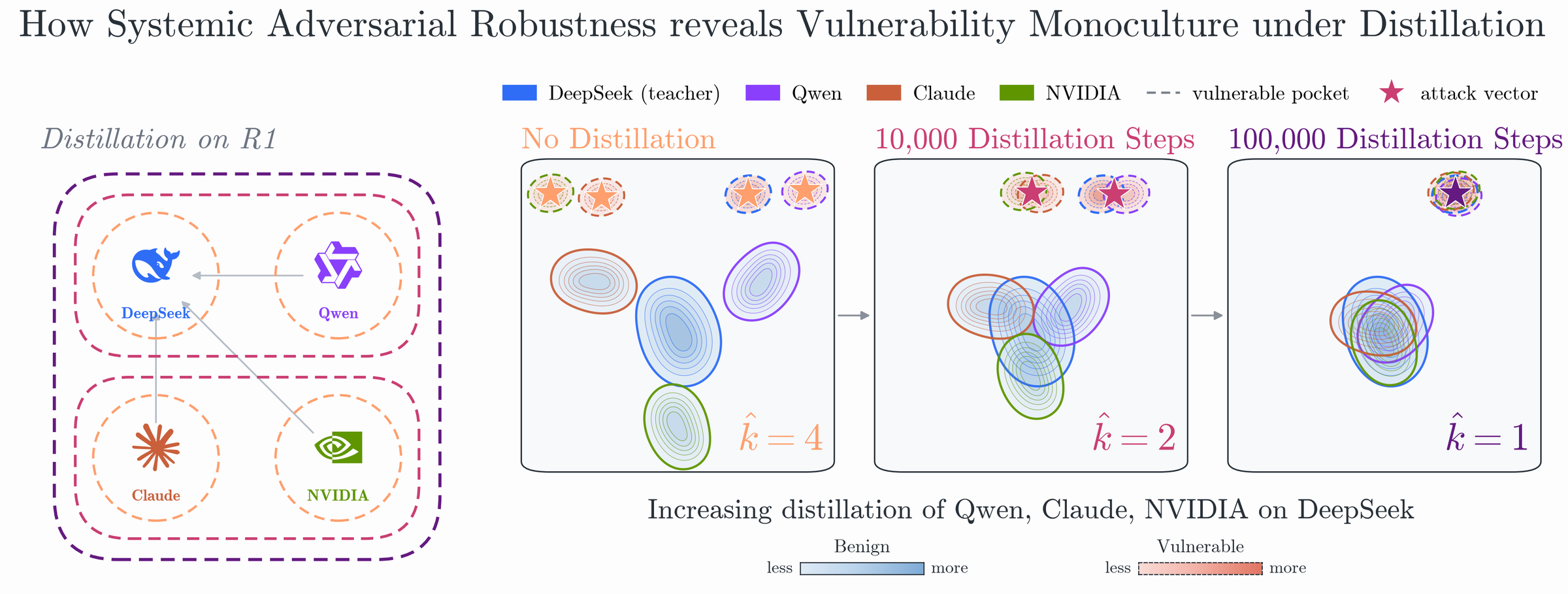}
    \vspace{-15pt}
    \caption{\small \looseness=-1 \textbf{Left:} A stylized visualization of how a decrease in {\namelong} emerges through model distillation in large language models. \textbf{Right} Increasing the number of distillation fine-tuning steps reduces the number of unique attack vectors $\hat{k}$ needed to compromise a model pool due to the internalization of shared vulnerabilities, visualized as the overlap of vulnerable level sets.}\label{fig:titlefig}
    \vspace{-20pt}
\end{figure*}

\section{Preliminaries}
\label{sec:prelim}


\looseness=-1
\xhdr{Notation and convention}
Let $\mathcal D:= \{(x_i, y_i)\}^{N}_{i=1}$ be a set of data points sampled from $\mathcal{P}(\mathcal X, \mathcal Y)$ where $\gX $ is the input space 
and $\mathcal Y$ forms a task-dependent output space (e.g., classification, regression, generation). When the task requires continuous inputs, we fix $\gX := \R^d$, while in the case of discrete sequences such as in language modeling, we assume that for a sequence of length $L$ over a token vocabulary $\gV$, the input space is $\gX := \gV^L$. We describe an arbitrary function, with learnable parameters $\theta \in \Theta$, as $f^\theta: \mathcal X \rightarrow\mathcal Y$. A collection of $M$-many such functions is termed a \emph{model pool}\footnote{For generative models (e.g., probabilistic large language models) we fix a decoding rule for sample generation, e.g., greedy decoding, such that each $f^\theta$ is a deterministic function.} and is denoted as $\mathcal{F} := \{f^{\theta}_{j}\}^M_{j=1}$. To define a perturbation set, we write $\gB_{\epsilon}(x) \coloneqq \left\{
        x' \in \gX :
        d(x,x') \leq \varepsilon
    \right\}$ as the ball of ``radius`` $\varepsilon$ around the original input $x$. In the case of continuous spaces $d_{\infty}(x,x')$ is the $\ell_\infty$ distance, while for token-level attacks we set $d_{\mathrm{edit}}(x,x')$ as the Levenshtein distance, where $\varepsilon$-budget constrains the number of single-token
insertions, deletions, and substitutions required to transform
$x$ into $x'$.
Finally, an attack algorithm on the function $f_j$ is denoted $\mathcal A_{j}(\varepsilon) : \mathcal{X} \to \mathcal{X}$ which takes in an input $x$ and returns a \emph{perturbed} input $x^\star \in \gB_\epsilon$

\subsection{Adversarial Robustness}
\label{sec:background_adversarial_robustness}

\looseness=-1
We are interested in the problem of producing adversarial attacks in the permissive white-box setting, which traditionally considers the safety relationship between input-output pairs centered at an instance $(x, y)$ given oracle access to a target model $f$ or model pool $\gF$. In particular, if $S_y \subseteq \mathcal{Y}$ is the safe output region, the goal of an adversarial attack is to produce an adversarial example such that $f(\mathcal{A}(x)) \notin S_y$. This motivates
the natural measure of performance of an attack algorithm $\gA$:
\begin{equation}
\label{eq:asr}
    \mathrm{ASR} \coloneqq \frac{1}{N} \sum_{i=1}^{N}\mathds{1}\left[f^\theta_j(\mathcal{A}_j(x_{i})) \notin S_{y_{i}}\right] \quad (x_i, y_i) \sim \gD.
\end{equation}
\looseness=-1
The \emph{attack success rate} (ASR)~\citep{carlini2017towards,croce2020minimally} defined in~\cref{eq:asr} underscores the ability of an attack algorithm to break a single model. 
A different but related measure of the performance of an attack algorithm is to consider the \emph{transferability} of an attack to a new function---i.e., $f^\theta_{k}(\mathcal{A}_{j}(x)) \notin S_y$ for $k \neq j$~\citep{szegedy2014intriguing, papernot2016transferability, liu2017delving}. Against a model pool~\citep{liu2017delving,mao2022transfer}, we can equally measure the performance of an attack algorithm by computing its transfer attack success coverage:
\begin{equation}
    \label{eq:tasc}
    \texttt{tASC}(j) := \frac{1}{M} \sum_{k=1}^{M} \mathrm{AS}(j, k), \quad \text{where}\, \mathrm{AS}(j,k) \coloneqq \mathds{1}\left[f^\theta_{k}(\mathcal{A}_{j}(x)) \notin S_y\right],
\end{equation}
\looseness=-1
where $\mathrm{AS}(j,k)$ captures the transfer of an attack from models $j \to k$. In addition to the usual transfer setting, it is possible to devise an attacker that crafts a transferable attack by leveraging a \emph{cluster} of models $\mathcal{A}_{\mathcal{I}}$ where $\mathcal{I} \subseteq [M] := \{1, \dots, M\}$ is the index set of models~\citep{dong2018boosting, chen2024rethinking}.
The resulting coverage is termed the \textit{joint} attack success coverage, and is defined as $\texttt{jASC}(\mathcal{I}) := \frac{1}{M} \sum_{j=1}^{M} \mathds{1}\left[f^\theta_j\left(\mathcal{A}_{\mathcal{I}}(x)\right) \notin S_y\right]$. Lastly, we note that $\texttt{tASC}(j) = \texttt{jASC}(\{j\})$.

\section{Method}
\label{sec:method}

\looseness=-1
We are interested in analyzing the collective adversarial robustness profile of a model pool $\gF$, in which subsets of the pool $\gI \subset \gF$ are potentially distilled from a common model within the pool. More precisely, we aim to characterize whether a shared teacher distillation setting that gives rise to the model pool can inadvertently expose a new form of vulnerability inherited by the students from a common teacher. In particular, we ask whether distillation creates this tension: can otherwise distinct students acquire vulnerabilities that 
a small collection of adversarial inputs can exploit?

\looseness=-1
\xhdr{Distinction for classical adversarial robustness}
At a cursory level, determining the shared adversarial robustness profile of $\gF$ bears resemblance to conventional adversarial attacks, including both the transferable  (\texttt{tASC}) and joint attack (\texttt{jASC}) settings. However, a key distinction is that individual attack success rates measure the individual vulnerability of a model, while \texttt{tASC} and \texttt{jASC} are fundamentally about the reachability of an optimized \emph{single adversarial input} within $\gF$. Crucially, no setting assesses the shared vulnerability of the entire model pool. More concisely, widespread failure against $\gF$ may require only a few broadly effective adversarial inputs, or many inputs targeting different models. These two extremes illuminate a new attack surface for shared vulnerabilities of model clusters within $\gF$ that current training methodology, such as model distillation, blatantly expose.

\looseness=-1
We organize the remainder of the section as follows: in \S\ref{sec:systemic_adversarial_robustness} we first rigorously define a new notion of adversarial robustness, termed \namelong (\nameshort) to characterize the shared vulnerability of a model pool. We then build intuition in \S\ref{sec:warmup} by considering a warm-up exercise with attacking a model pool trained on MNIST. In \S\ref{sec:algorithms} we present a top-down greedy algorithm to compute \nameshort against practical real-world models. Finally, in \S\ref{sec:set_cover} we provide a theoretical perspective on \nameshort through the partial set cover problem, which helps codify approximation guarantees.

\subsection{Systemic Adversarial Robustness}
\label{sec:systemic_adversarial_robustness}

\looseness=-1
We now define \namelong (\nameshort) to quantify the
shared and systemic robustness profile of a prescribed model pool $\gF$. Intuitively, given a fixed model pool and perturbation set \nameshort is the minimum number of distinct inputs needed to compromise a
target fraction of the pool, normalized by its size. Stated more formally, results in the following definition below.




\begin{nfmdefinition}[$1-\delta$ Systemic Adversarial Robustness]
\label{def:sys_adv_robust}
Let $\mathcal F=\{f_j\}_{j=1}^{M}$ be a pool of $M\geq 1$
deterministic models $f_j:\mathcal X\to\mathcal Y$.
Fix an instance $(x,y)\in\mathcal X\times\mathcal Y$, an admissible
perturbation set $T_x\subseteq\mathcal X$, a safe output region
$S_y\subseteq\mathcal Y$, and $\delta\in[0,1)$.
For each admissible input $x'\in T_x$, let
\[
    C(x') := \{j\in[M] : f_j(x')\notin S_y\}
\]
denote the indices of the models it compromises, where
$[M]=\{1,\ldots,M\}$.
The \emph{systemic adversarial robustness at coverage $1-\delta$} is the
minimum number of distinct admissible inputs needed to collectively
compromise at least a $1-\delta$ fraction of the pool, normalized
by its size:
\begin{equation}
\label{eq:systemic-robustness}
    \operatorname{SAR}_{1-\delta}(\mathcal F;x,y)
    :=
    \frac{1}{M}
    \min_{V\subseteq T_x}
    \left\{
        |V|
        \;:\;
        \frac{1}{M}\left|\bigcup_{x'\in V} C(x')\right|
        \geq (1-\delta)
    \right\}.
\end{equation}
\end{nfmdefinition}
\looseness=-1
In~\cref{def:sys_adv_robust} each model counts toward coverage if it is compromised by at least
one input in $V$. If no collection satisfies the coverage constraint,
we define the systemic adversarial robustness to be $1$.
At the level of shared vulnerabilities, small \nameshort values indicate a critical shared risk profile, in that only a few crafted adversarial inputs suffice for compromising a $1-\delta$ fraction of the model pool $\gF$. Specifically, a low \nameshort implies the existence of highly \emph{correlated} vulnerabilities and thus provides a holistic robustness profile.
We also highlight that \nameshort does not provide commentary on individual robustness measurements, but rather \nameshort aids us in investigating whether
there are latent clusters---potentially due to model distillation---within $\gF$ that are vulnerable and easier to exploit collectively.

\cut{
Importantly, SAR alleviates the issues we established for joint and transferable attacks: the robustness of a pool is no longer measured by the cluster of models with the most in common or the most vulnerable cluster of models. SAR takes a wholistic look (up to a user-defined portion $\delta$) at the vulnerabilities of a pool of models. Systemic adversarial robustness being low implies highly correlated vulnerabilities. 
}


\begin{figure*}[h]
    \centering
    \includegraphics[width=\textwidth]{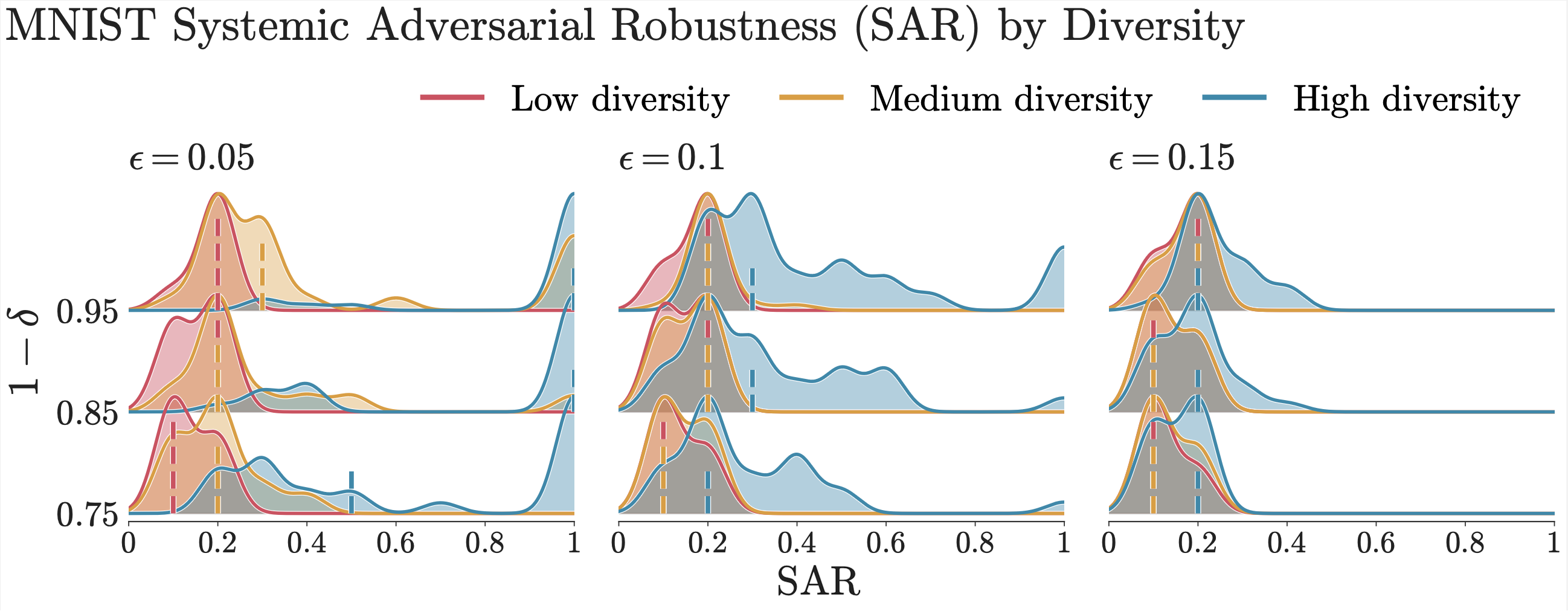}
    \vspace{-15pt}
    \caption{\small \looseness=-1 Systemic Adversarial Robustness increases as a model pool becomes more diverse.}\label{fig:mnist_diversity}
    \vspace{-15pt}
\end{figure*}

\subsection{Warmup with MNIST}
\label{sec:warmup}

\looseness=-1
We first illustrate the distinction between \nameshort and conventional adversarial robustness using a controlled experiment with MNIST classifiers. As the canonical attack algorithm $\gA$ we use the PGD attack~\citep{madry2019deeplearningmodelsresistant}. We next systematically investigate two concrete threads as a warmup:

\begin{enumerate}[label=\textbf{(Q\arabic*)},left=0pt,nosep]
\item \looseness=-1 \xhdr{Pool Diversity} How does \nameshort change as a function of diversity within the model pool $\gF$?
\item \looseness=-1 \xhdr{Distillation} How does distillation 
change the individual ASR and \nameshort?
\end{enumerate}

\looseness=-1
\xhdr{(Q1) Pool Diversity}
We construct three pools of classifiers from low to medium to high diversity. The
\emph{low-diversity} pool uses the same MLP architecture, training
data, and optimization procedure, varying $f_i \in \gF$ only in initialization.
The \emph{medium-diversity} pool additionally fixes the architecture but varies data
augmentation, training splits, optimization, and regularization. Lastly,
the \emph{high-diversity} pool also varies architecture, combining
MLPs and variants of convolutional models. In all cases, the models are trained to convergence with mean test accuracies of $98.0\%$, $98.3\%$, and $99.2\%$ (full training
details are provided in \S\ref{app:mnist}).

\looseness=-1
The first controlled experiment is visualized in~\cref{fig:mnist_diversity} and illustrates perturbations crafted within $\gB_\varepsilon$, as we vary the radius $\varepsilon$ in an $\ell_\infty$-sense. In particular, we allow each model's safe region $S_y$ to be its own clean prediction, and count the model as compromised when its prediction falls outside $S_y$, i.e.,
$    C(x') = \{j : f_j(x') \neq f_j(x)\}$. We next compute the \nameshort 
across pools, perturbation budgets, and coverage thresholds, tracing a curve as a function of $1-\delta$ along the $x$-axis. This \nameshort curve is each model pool's vulnerability profile and exposes the main finding: the high-diversity pool at the stringent $\varepsilon=0.05$ attack budget exhibits substantially greater systemic adversarial robustness than the other pools. 
As the attack strength increases by increasing the perturbation budget, we observe that the high-diversity pool sustains its \nameshort advantage.

\looseness=-1
\xhdr{(Q2) Distillation}
\looseness=-1
We next consider a controlled distillation setup to assess the impact of training on annotations of a \emph{shared} teacher. Specifically, we train a ResNet-18 teacher on MNIST ($99.5\%$ test accuracy)
and use its predictions to label $240{,}000$ EMNIST-digits images.
For this experiment, we use $M=20$ of the high-diversity model pool as students and train each on these teacher-labeled examples.

\looseness=-1
In~\cref{fig:emnist_distillation} we visualize two plots of interest. First, we plot \nameshort and ASR as a function of distillation fine-tuning steps in comparison to the reference starting point of the model pools. This experiment aims to tease apart an initial counterintuitive phenomenon of \nameshort: its independence from individual model ASR. In particular, one can observe a substantial reduction in the required distinct attack count for a specified target coverage of $1-\delta = 0.85$, without a noticeable decrease in individual attack success. This not only reveals a change in collective vulnerability due to distillation, but points to the fact that ASR is an insufficient proxy for \nameshort. 
\Cref{fig:emnist_distillation} 
illustrates the emergence of a \emph{vulnerability monoculture} as distillation exposes students in the model $\gF$ to higher shared risk. Additionally, we plot the distribution of the SAR at $1-\delta=0.85$, which clearly displays a shift in model pool robustness over the course of distillation-based fine-tuning.

\begin{figure}[h]
    \centering
    \includegraphics[width=\linewidth]{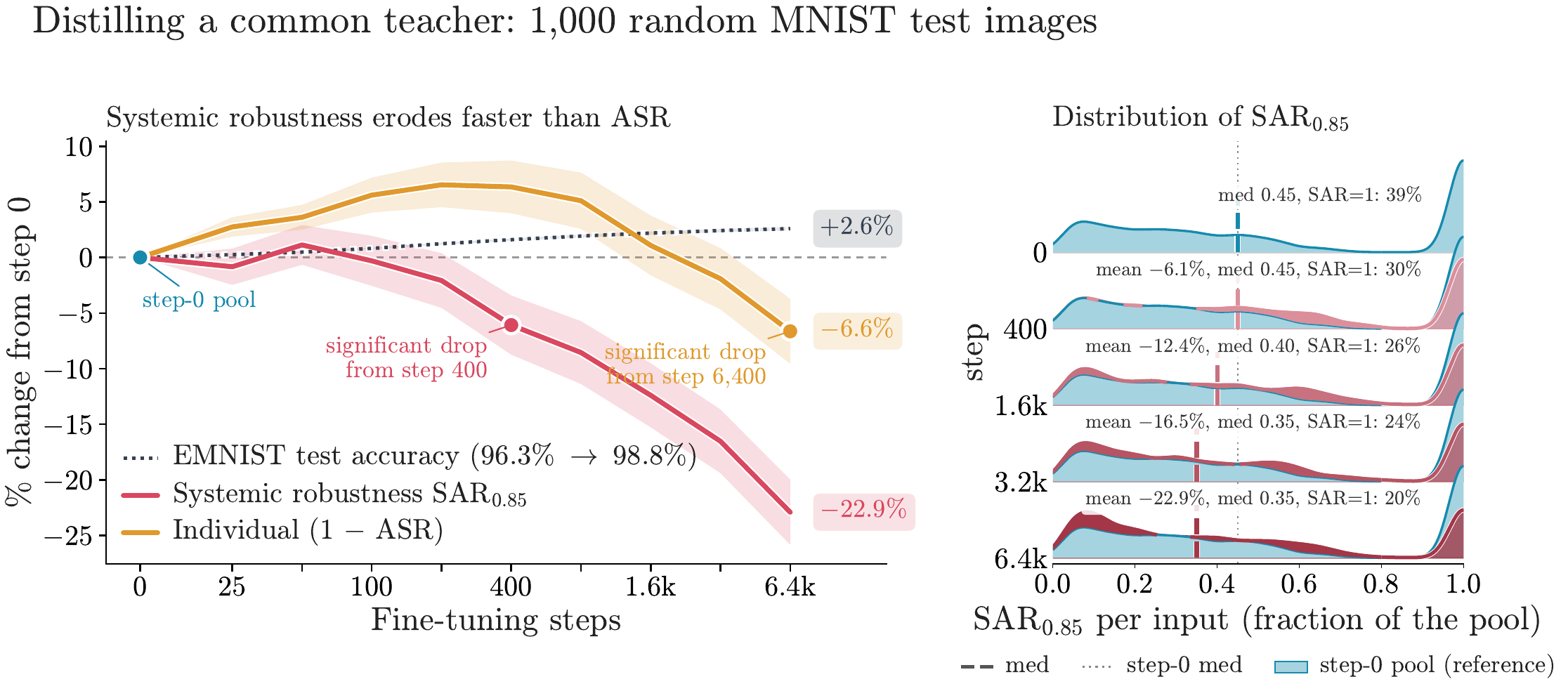}
    \vspace{-10pt}
    \caption{\small \looseness=-1 Systemic Adversarial Robustness vs Attack Success Rate under distillation. \textbf{Left:} The change in \nameshort and ASR as a function of fine-tuning steps. \textbf{Right:} Tracing the \nameshort distribution curves between the starting model pool and the fine-tuned model pool for $  \operatorname{SAR}_{1-\delta} = 0.85$ at various steps during fine-tuning. }
    \label{fig:emnist_distillation}
\end{figure}

\subsection{Estimating \namelong}
\label{sec:algorithms}

\looseness=-1
Estimating \nameshort as defined in~\cref{def:sys_adv_robust} requires finding a small collection of
admissible perturbations whose combined coverage reaches the prescribed
$1-\delta$ threshold. At the scale of LLMs and other real-world security use cases, such an exhaustive search over collections is computationally intractable. Instead, we aim to provide a computationally tractable \emph{top-down} greedy procedure that synthesizes a simple line of attack: find an input perturbation that compromises several models in $\gI \subset \gF$, record its coverage, and repeat on the models left uncovered $\gF \setminus \gI$. This top-down algorithm is outlined in~\cref{alg:top_down} and defines a recursive procedure until $\tau \coloneqq  \lceil (1-\delta) M \rceil$ is reached.

\looseness=-1
\xhdr{Discovering shared vulnerabilities}
\Cref{alg:top_down} maintains an active set of model indices
$\mathcal I$, initially $[M]$, and a covered set $\gR$, initially
empty. At each iteration, a joint attack $\mathcal A_{\mathcal I}$
searches for an admissible input with broad coverage of the
active models. If it compromises at least one member of
$\mathcal I$, we retain the input, add the newly compromised
models to $\gR$, and remove them from further targeting.
On the other hand, if $\gA_{\gI}$ compromises no active model, we attempt
a single-model attack against one member of $\gI$. This procedure continues recursively until $|\gR|\geq\tau$ or no active models remain. Specifically, if the target coverage is attained using
$\hat{k}$ distinct inputs, the algorithm returns the \nameshort estimate
$\widehat{\operatorname{SAR}}_{1-\delta}=\hat{k}/M$ which is an upper bound on the true systemic adversarial robustness of $\gF$.

\looseness=-1
It is instructive to note that the first attack launched in~\cref{alg:top_down} targets the entire model pool $|\gF|$, and every subsequent accepted attack vector is optimized against \emph{all} current models remaining in the pool $\gF \setminus \gR$. Intuitively, clusters that share the same attack surface area emerge from the observed coverage. Furthermore, a broadly transferable attack may cover much of the pool, but the additional attacks needed to reach the target coverage
determine the estimated systemic adversarial robustness.

\begin{algorithm}[t]\small
\caption{Top-Down Greedy Estimate of $1-\delta$ Systemic Robustness}
\label{alg:top_down}
\begin{algorithmic}[1]
\State \textbf{Input:} instance $(x, y)$, model pool $\mathcal F $, joint attack $\mathcal A$, threshold $\delta$
\State $\tau := \lceil (1-\delta) M \rceil$, \; $\mathcal I := [M]$, \; $\mathcal R := \emptyset$, \; $X := \emptyset$ \Comment{uncovered models, covered models, attack vectors}
\While{$|\mathcal R| < \tau$ \textbf{and} $\mathcal I \neq \emptyset$}
    \State $x' := \mathcal A_{\mathcal I}(x)$, \quad $F := \{ j \in \mathcal I : f_j(x') \notin S_y \}$ \Comment{universal vector for all uncovered models}
    \If{$F = \emptyset$} \Comment{white-box fallback}
        \State pick $j \in \mathcal I$; \; $x' := \mathcal A_{\{j\}}(x)$, \quad $F := \{ j' \in \mathcal I : f_{j'}(x') \notin S_y \}$
        \If{$F = \emptyset$} $\mathcal I := \mathcal I \setminus \{j\}$; \textbf{continue} \Comment{$j$ is unattackable under $\mathcal A$} \EndIf
    \EndIf
    \State $X := X \cup \{x'\}$, \quad $\mathcal R := \mathcal R \cup F$, \quad $\mathcal I := \mathcal I \setminus F$
\EndWhile
\State \Return $\hat k / M$ with $\hat k := |X|$ if $|\mathcal R| \geq \tau$, and $1$ otherwise \Comment{degenerate case of Definition~\ref{def:sys_adv_robust}}
\end{algorithmic}
\end{algorithm}

\subsection{Analogy to Set Covering and Approximation Guarantees}\label{sec:set_cover}

Fix an instance $(x, y)$ and associate with every perturbed input $x' \in T_x$ its \textit{cover} $\mathcal C(x') := \{ j \in [M] : f_j(x') \notin S_y \}$, i.e., the members of the pool that $x'$ breaks. Definition~\ref{def:sys_adv_robust} then reads
\begin{equation}\label{eq:psc}
    k^\star := \min\Big\{ |X| \;:\; X \subseteq T_x,\ \big| \textstyle\bigcup_{x' \in X} \mathcal C(x') \big| \geq \tau\Big\}
\end{equation}
with systemic robustness $k^\star / M$. Problem~\eqref{eq:psc} is the \textit{partial set cover} problem~\citep{kearns1994computational, slavik1997improved} over the universe $[M]$ with unit costs and the set system $\{\mathcal C(x') : x' \in T_x\}$; the choice $\delta = 0$ recovers the classical set cover problem~\citep{chvatal1979greedy}, so computing the value of Definition~\ref{def:sys_adv_robust} exactly is NP-hard in general. Two features distinguish our setting from the textbook setting. First, the set system is only implicit: $T_x$ is continuous or exponentially large, and covers are exposed one at a time by the attack. Second, the attack returns \textit{some} vector rather than the vector of maximal marginal coverage. Algorithm~\ref{alg:top_down} is therefore the greedy algorithm for partial cover run with an approximate incremental oracle~\citep{goundan2007revisiting}, whose quality we summarize in a single parameter.

 \begin{nfmdefinition}[$\alpha$-approximate joint attack]\label{def:alpha}
A joint attack $\mathcal A$ is \textit{$\alpha$-approximate}, $\alpha \in (0, 1]$, if for every non-empty $\mathcal I \subseteq [M]$
\begin{equation}
    \big| \mathcal C(\mathcal A_{\mathcal I}(x)) \cap \mathcal I \big| \;\geq\; \alpha \max_{x' \in T_x} \big| \mathcal C(x') \cap \mathcal I \big| .
\end{equation}
\end{nfmdefinition}
\looseness=-1
That is, the vector returned against the uncovered models breaks at least an $\alpha$ fraction of the number of models that the best vector in $T_x$ would break; $\alpha = 1$ is an exact greedy oracle. Under Definition~\ref{def:alpha}, and whenever \eqref{eq:psc} is feasible, the white-box fallback of Algorithm~\ref{alg:top_down} is never executed: as long as fewer than $\tau$ models are covered, some uncovered model is attackable, so the joint attack covers at least one. The fallback only guards termination when the attack underperforms.

\begin{mdframed}[style=MyFrame2]
\begin{restatable}[Approximation ratio]{theorem}{approxratio}\label{thm:approx}
\looseness=-1
Fix an instance $(x, y)$ and $\delta \in [0, 1)$ for which \eqref{eq:psc} is feasible, and let $\hat k$ be the number of vectors returned by Algorithm~\ref{alg:top_down} with an $\alpha$-approximate joint attack. Then
\begin{equation}\label{eq:ratio}
    k^\star \;\leq\; \hat k \;\leq\; \Big( 1 + \frac{\ln{\tau}}{\alpha} \Big) k^\star ,
\end{equation}
and the same ratio holds between the estimated and the true systemic robustness $\hat k / M$ and $k^\star / M$. Being uniform over instances, it carries over to the average over $\mathcal D$.
\end{restatable}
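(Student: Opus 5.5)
The lower bound $k^\star \le \hat k$ needs only feasibility of the output. Since \eqref{eq:psc} is feasible, the remark after Definition~\ref{def:alpha} shows the fallback never runs: while $|\mathcal R|<\tau$, an optimal collection $X^\star$ still covers at least $\tau-|\mathcal R|\ge 1$ uncovered models, so some vector covers an active model and the $\alpha$-approximate attack covers at least one. Hence every iteration strictly increases $|\mathcal R|$, and the algorithm stops with $|\mathcal R|\ge\tau$ after $\hat k$ accepted vectors. These vectors form a feasible $X$ in \eqref{eq:psc}, so $k^\star\le\hat k$.

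For the upper bound I would use the standard greedy partial-cover argument with a multiplicative oracle loss. Let $r_t$ be the residual requirement $\tau-|\mathcal R_t|$ after $t$ accepted vectors, with $r_0=\tau$. \textbf{Key step:} the $k^\star$ sets of $X^\star$ cover at least $\tau$ models in total, so they cover at least $r_t$ models of the active set $\mathcal I_t=[M]\setminus\mathcal R_t$. By pigeonhole, one of them covers at least $r_t/k^\star$ active models. Hence $\max_{x'}|\mathcal C(x')\cap\mathcal I_t|\ge r_t/k^\star$, and the $\alpha$-approximate attack gives
\[
r_{t+1}\le r_t\Big(1-\frac{\alpha}{k^\star}\Big),\qquad\text{so}\qquad r_t\le \tau e^{-\alpha t/k^\star}.
\]
It is important that \texttt{Alg.}~\ref{alg:top_down} removes covered models from $\mathcal I$, so the oracle is queried exactly on the uncovered models. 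This is what makes the marginal gain equal $|\mathcal C\cap\mathcal I_t|$.

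\textbf{Counting steps.} Let $t_0=\lceil (k^\star/\alpha)\ln\tau\rceil$. If the algorithm has not stopped by round $t_0$, then $r_{t_0}\le\tau e^{-\ln\tau}=1$. This is the delicate point: the bound only gives $r_{t_0}\le1$, and since $r$ is an integer we need a cleaner count. I would set $t_1$ to be the first $t$ with $r_t\le k^\star$. From $r_t\le\tau e^{-\alpha t/k^\star}$ we get $t_1\le \frac{k^\star}{\alpha}\ln(\tau/k^\star)+1$. After that, each step reduces $r$ by at least $1$, so at most $k^\star$ further steps are needed. This gives $\hat k\le k^\star+\frac{k^\star}{\alpha}\ln(\tau/k^\star)+1$. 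Matching exactly $(1+\ln\tau/\alpha)k^\star$ needs care with the $+1$.

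I would handle the rounding by instead using real-valued potential slack. Take the last step $t$ with $r_t>k^\star$ or use the tail $r_{t_1-1}>k^\star$. Then $k^\star<\tau e^{-\alpha(t_1-1)/k^\star}$, which gives $t_1-1<\frac{k^\star}{\alpha}\ln(\tau/k^\star)$, and the remaining steps number at most $r_{t_1}\le k^\star$. Altogether
\[
\hat k\le \frac{k^\star}{\alpha}\ln\frac{\tau}{k^\star}+1+k^\star .
\]
Since $k^\star\ge1$, we have $\ln(\tau/k^\star)+\alpha/k^\star\le\ln\tau$ whenever $k^\star\ge2$, using $\ln k^\star\ge\ln 2\ge 1/k^\star\ge\alpha/k^\star$. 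When $k^\star=1$, one vector covers $\tau$ models, so the first oracle call covers at least $\alpha\tau$. I expect this case analysis, matching the exact constant, to be the main obstacle. If it resists, I would fall back to the continuous bound $\hat k\le k^\star(1+\ln\tau/\alpha)$ via the argument that the steps with $r_t\le k^\star$ number at most $k^\star$ while the earlier ones number at most $(k^\star/\alpha)\ln(\tau/k^\star)\le (k^\star/\alpha)\ln\tau$, counting the boundary step inside the second phase.

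Finally, dividing by $M$ transfers the ratio to $\hat k/M$ versus $k^\star/M$. Because the bound holds pointwise for each $(x,y)$, averaging over $\mathcal D$ preserves both inequalities by linearity.
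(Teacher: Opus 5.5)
Your lower bound, the no-fallback observation, and the contraction $r_{t+1}\le r_t(1-\alpha/k^\star)$ all match the paper. The step count, however, does not establish the theorem.

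You read $r_{t_0}\le 1$ as the obstacle and switch to a two-phase count, which gives $\hat k< k^\star+1+(k^\star/\alpha)\ln(\tau/k^\star)$. Absorbing the extra $+1$ via $\ln k^\star\ge\alpha/k^\star$ works only for $k^\star\ge 2$, so $k^\star=1$ is left open. In that case your bound is genuinely too weak: for $k^\star=1$, $\tau=2$, $\alpha=1$ it gives $\hat k\le 2$, whereas the theorem asserts $\hat k\le 1+\ln 2$, i.e.\ $\hat k=1$.

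Your fallback does not repair this, because both per-phase counts it asserts can fail. Call the boundary step the one taking $r$ from above $k^\star$ to at most $k^\star$.
If you place it in the second phase, that phase starts from $r>k^\star$ and can contain $k^\star+1$ steps. Take $k^\star=1$, $\tau=3$, $\alpha=0.3$, and an oracle breaking one model per call (this is $0.3$-approximate, since $0.3\cdot 3<1$). The residuals are $3,2,1,0$, and the last two steps lie in that phase.
If you place it in the first phase, that phase can exceed $(k^\star/\alpha)\ln(\tau/k^\star)$. With $\tau=k^\star+1$ and $\alpha=1$ it has one step, while $k^\star\ln(1+1/k^\star)<1$.

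The missing ingredient is strictness, and it makes the two-phase split unnecessary. Since $1-z<e^{-z}$ for $z>0$, the paper uses $r_t\le\tau(1-\alpha/k^\star)^t<\tau e^{-\alpha t/k^\star}$ for every $t\ge1$. At $t=\max\{1,\lceil (k^\star/\alpha)\ln\tau\rceil\}$ the right-hand side is at most $1$, so $r_t<1$, hence $r_t=0$ by integrality. This gives
\[
\hat k\;\le\;\max\bigl\{1,\lceil (k^\star/\alpha)\ln\tau\rceil\bigr\}\;\le\;1+\frac{k^\star}{\alpha}\ln\tau\;\le\;\Bigl(1+\frac{\ln\tau}{\alpha}\Bigr)k^\star ,
\]
with $k^\star\ge1$ used only to absorb the rounding. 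Equivalently, you can keep your non-strict bound and evaluate at $t=\lfloor (k^\star/\alpha)\ln\tau\rfloor+1$, which strictly exceeds $(k^\star/\alpha)\ln\tau$.

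This single count covers every $k^\star\ge1$, including $\tau=1$, with no case analysis. The rest of your proposal then goes through, including the transfer to $\hat k/M$ and to the average over $\mathcal D$.
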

\end{mdframed}

\looseness=-1
The value of \cref{thm:approx} is not a guarantee of performance, but an intuition about when our algorithm is known to perform well. \Cref{def:sys_adv_robust} casts SAR as a partial set cover problem over the pool \citep{kearns1994computational,slavik1997improved} whose sets, the covers $C(x')$ of admissible inputs, are revealed only by running an attack. An estimate $\hat k$ from \Cref{alg:top_down} therefore carries two errors: the \emph{adversarial error} of the attack, which may miss breaking inputs that exist, and the \emph{procedural error} of the top-down procedure, which commits to covers greedily and may use more inputs than necessary. Our analysis in \cref{thm:approx} bounds only the procedural error, that is, the error the top-down procedure adds on top of the adversarial error. 

\looseness=-1
Writing $\alpha_t$ for the fraction of the best achievable marginal coverage that the attack attains at step $t$, and $\bar\alpha$ for its mean over all but the last step, every run that discards no model satisfies $k^\star\le\hat k\le 1+k^\star\ln\tau/\bar\alpha$ (\Cref{thm:approx}).  On the other hand, adversarial error itself cannot be bounded in general: for a single model, an efficient attack with any guaranteed $\alpha>0$ would decide neural-network verification, which is NP-hard \citep{katz2017reluplex}. Because $\bar\alpha$ is not observable, the theorem is a decomposition rather than a certificate, and we use it as an empirical guide.


\section{Experiments}
\label{sec:experiments}

\looseness=-1
In \S\ref{sec:warmup} we showed that models with similar individual robustness
can form pools with very different systemic robustness, and that fine-tuning on
a shared teacher can lower SAR while individual robustness holds. We now ask
whether the same happens to language models that share a post-training source.
To do so, we compare a pool of models before fine-tuning (parents) and after fine-tuning from a given teacher (descendants). We estimate
SAR with different pool-aware attacks, and control for models that already fail on
the unmodified request.

\looseness=-1
\xhdr{Model pools}
Our LLM experiments rely on three different model pools varying in model complexity and diversity. We evaluate the systemic adversarial robustness compared to the attack success rate over the course of distillation on a \textit{small} model pool with LLMs ranging from 135M to 5B parameters. Additionally, we use a \textit{medium} and \textit{large} pool to further evaluate the impact of distillation on larger and more diverse base models. The medium pool consists of five Qwen models and one Llama model. The large pool is more diverse with eight models from five different model providers. All of our experiments are distilled on \textit{DeepSeek-R1} and/or \textit{Claude Opus 4.6} traces. For further details we defer to \S\ref{app:llm_pools}.


\looseness=-1
\xhdr{Behaviors and judge}
Harmful requests are drawn from the standard category of HarmBench~\citep{mazeika2024harmbench}. Decoding is deterministic, and a model counts
as compromised when the HarmBench Llama-2-13B
classifier~\citep{mazeika2024harmbench} labels its response harmful.

\looseness=-1
\xhdr{Attacks}
We estimate SAR with the top-down procedure of \cref{sec:algorithms} and three
pool-aware joint attacks. Within the main text we consider attacks warm-started with samples from AmpleGCG~\citep{kumar2024amplegcg} and optimized using BEAST~\citep{sadasivan2024fast} as joint attack. In the Appedix we show that the same trends are found using other optimization approaches including warm starting with AdvLLM~\citep{sun2025iterative}, optimized with a joint GCG~\citep{zou2023universal}, as well as using non-insertion based attacks such as AutoDAN~\citep{liu2025autodan}.

\looseness=-1
\xhdr{Evaluation}
For joint BEAST, $\hat{k}$ is the number of attacks the top-down search selects, and behaviors whose coverage requirement is never met have $\mathrm{SAR}=1$. A model that fails on the unmodified request is compromised without any attack. We report the relative change in mean $\hat{k}/M$ from the original pool.

\subsection{Systemic Risk of Distilled Lineages}
\label{sec:teacher_fingerprints}

\begin{figure*}[t]
    \centering
    \includegraphics[width=\textwidth]{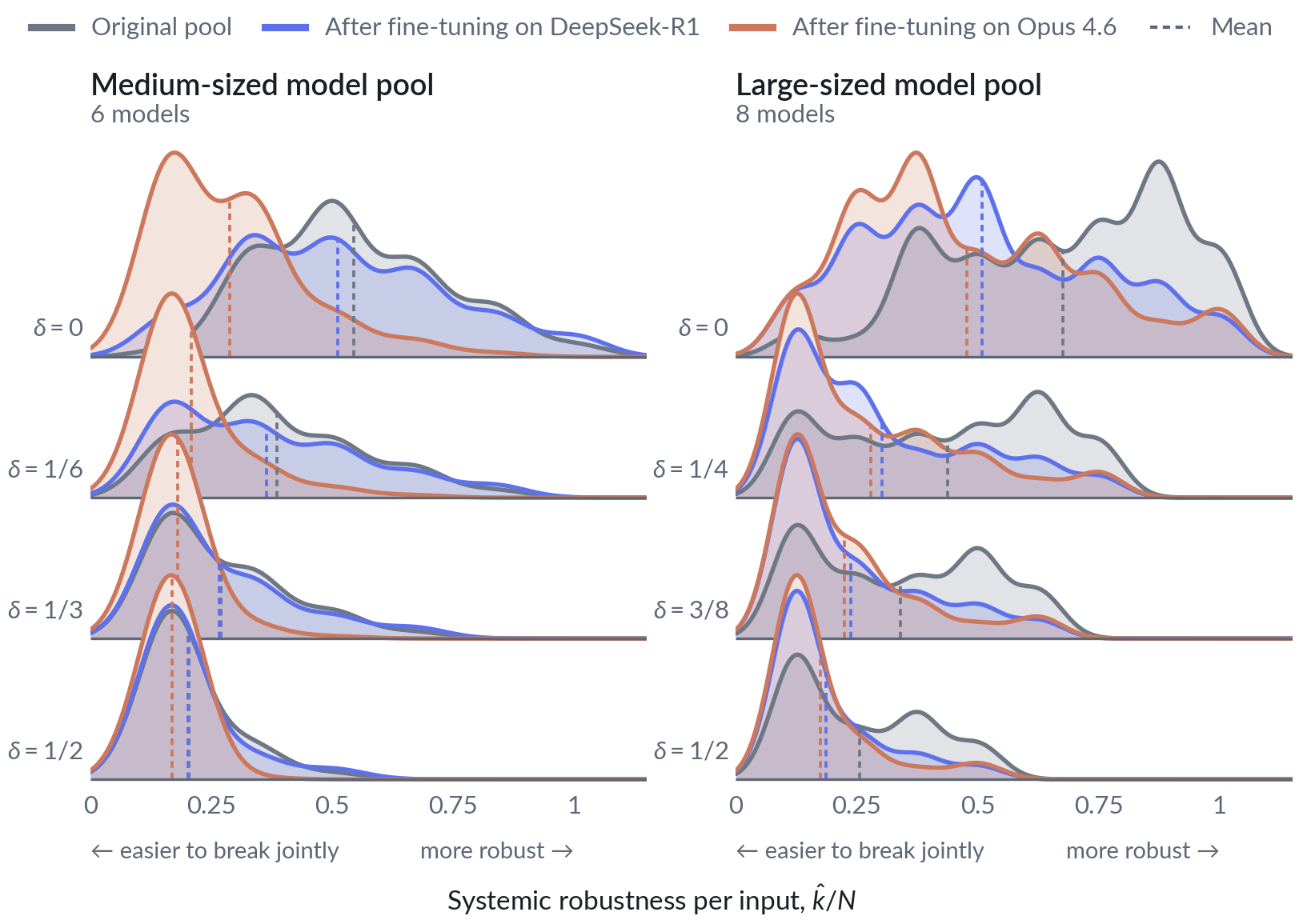}
    \vspace{-10pt}
    \caption{\textbf{Distillation lowers systemic robustness under AmpleGCG + BEAST.} Left: the distribution of singleton-completed $\hat{k}/M$ at varying risk tolerances $\delta$ for the medium-sized pool ($M=6$), comparing the original pool, the pool after DeepSeek-R1 distillation, and the pool after subsequent Opus 4.6 distillation.
    Right: same as left but for the large-sized pool ($M=8$). Lower values mean
    that fewer suffixes cover the pool. Dashed lines denote means.}
    \label{fig:deepseek_opus_distill}
\end{figure*}

\looseness=-1
\xhdr{DeepSeek-R1}
\Cref{fig:deepseek_opus_distill} shows, across the board, that an initial distillation on DeepSeek-R1 traces reduces the systemic robustness under AmpleGCG + BEAST. However, the medium and large pools show slightly different patterns, with the medium pool having a maximum decline in SAR of around $6\%$. The large pool, on the other hand, is impacted more severely, with a decrease in robustness of up to $31\%$. To compromise every model in the medium pool the mean number of attack vectors required moves from $3.258$ to $3.072$, while the mean for the large pool falls from $5.408$ to $4.072$ (\Cref{tab:khat-means}).

In addition, we observe SAR to be a more meaningful indicator of pool robustness than ASR in LLMs, similar to motivating findings on MNIST. \Cref{fig:llm_distillation} shows that SAR erodes faster and further than individual robustness ($1-\mathrm{ASR}$), with $\mathrm{SAR}_{0.85}$ declining by $15.0\%$ compared to a $5.3\%$ decline in individual robustness. Moreover, SAR drops significantly after only $0.05$ epochs, while individual robustness only shows a significant drop after $0.15$ epochs. The final distilled pool is less systemically robust at every level of $\delta$ as can be seen on the right hand side of \Cref{fig:llm_distillation}.

\begin{figure}[h]
    \centering
    \includegraphics[width=\linewidth]{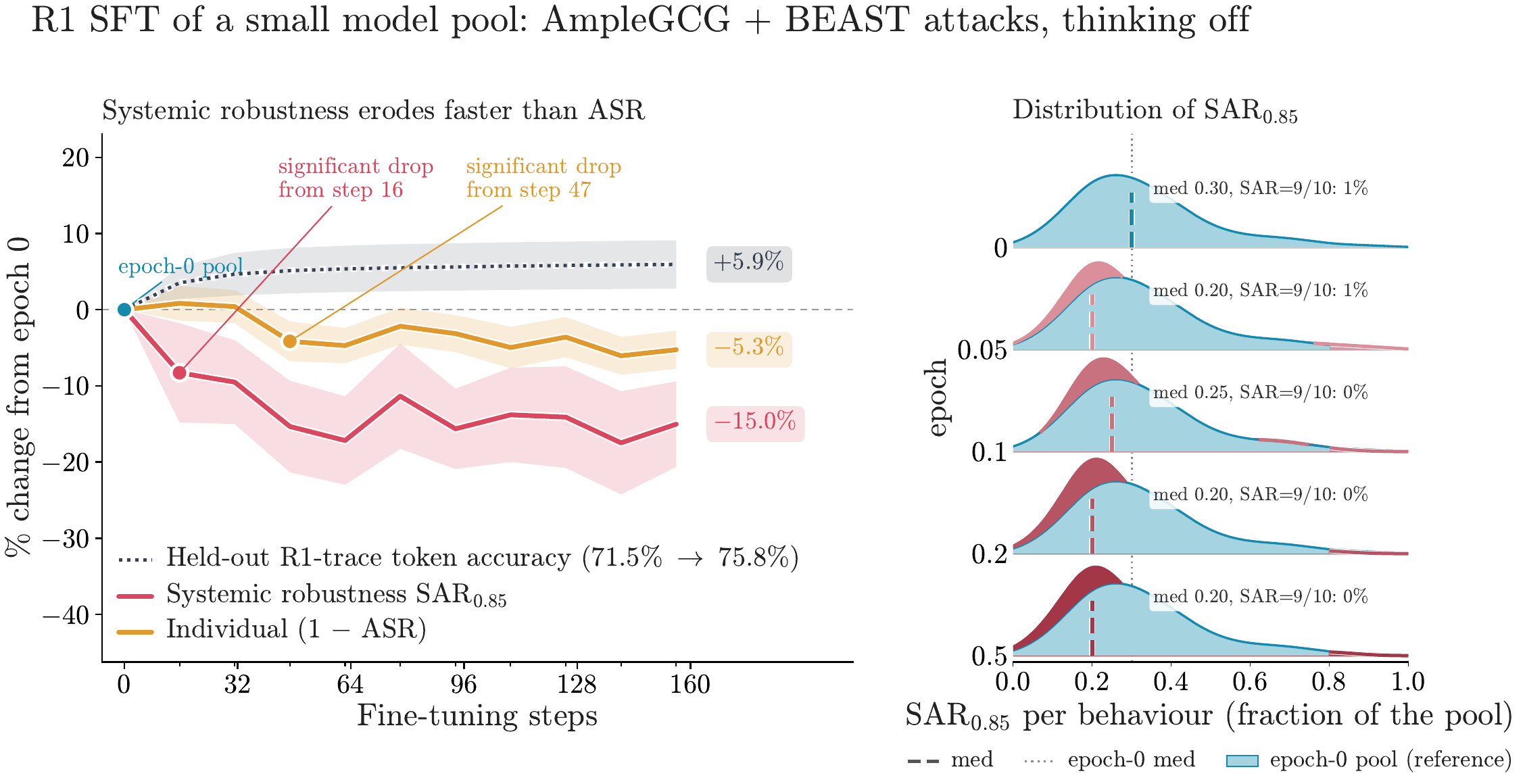}
    \vspace{-10pt}
    \caption{\small \looseness=-1 Same as~\Cref{fig:emnist_distillation} but for distillation of a pool of ten small LLMs on DeepSeek-R1 traces, under AmpleGCG + BEAST with thinking off.}
    \label{fig:llm_distillation}
\end{figure}

\looseness=-1
\xhdr{Opus}
This pattern is amplified when further distilling the descendants of DeepSeek-R1 distillation on Opus 4.6 traces. In \Cref{fig:deepseek_opus_distill} and \Cref{tab:khat-means} it is evident that the mean $\hat{k}$ continues to decline. This time the severity of the pattern inverts: the medium pool loses a further $\sim44\%$ of its systemic robustness, while the large pool becomes only an additional $\sim6\%$ less robust. For $\delta=0$ (i.e., compromising all models) the average attack vectors required fall from $3.072$ to $1.728$ and $4.072$ to $3.824$ for the medium and large pools respectively.

\looseness=-1
\xhdr{Takeaways} 
Across all three pools, fine-tuning on shared teacher traces lowers systemic robustness under AmpleGCG + BEAST (\S\ref{app:llm_pools} for a more detailed analysis). SAR erodes faster and further over the course of distillation than individual robustness, making it a more sensitive indicator of pool-level risk. Neither teacher is consistently more harmful: DeepSeek-R1 distillation degrades the large pool most, whereas subsequent Opus 4.6 distillation degrades the medium pool most.

\section{Related Works}
\label{sec:related_work}

\looseness=-1
\xhdr{Adversarial Attacks}
We can broadly divide adversarial attacks into white-box and black-box attacks. Prominent white-box attacks include the Fast Gradient Sign Method (FGSM) \citep{goodfellow2015explainingharnessingadversarialexamples}, its iterative counterpart Projected Gradient Descent (PGD) \citep{madry2019deeplearningmodelsresistant}, MI-FGSM \citep{dong2018boosting}, the Carlini-Wagner attack \citep{carlini2017towards}, elastic-net attacks \citep{chen2018ead}, and fast adaptive boundary attacks \citep{croce2020minimally}. Black-box attacks commonly rely on the transferability of white-box attacks, e.g., Square Attack \citep{andriushchenko2020squareattackqueryefficientblackbox}, and HopSkipJumpAttack \citep{chen2020hopskipjumpattack}. AutoAttack \citep{croce2020reliable} applies white- and black-box attacks alike to automatically find suitable adversarial attacks.

\looseness=-1
\xhdr{Ensembles in Adversarial Contexts}
Ensembles for adversarial defense, and analogously for attacks, are extensively studied~\citep{li2025quantifyingcorrelationsmachinelearning, wood2023unified}. ADP \citep{pang2019improving} regularises training based on the Shannon entropy, TRS \citep{yang2021trs} enforces low loss-gradient similarity to reduce transferability among members, and \cite{cui2022synergyofexperts} trains ensemble members to minimise vulnerability overlap. GAL \citep{kariyappa2019improvingadversarialrobustnessensembles} reduces the dimensionality of the subspace adversarial samples live in, while DVERGE \citep{yang2020dverge} uses a novel diversity metric to reduce shared vulnerability.~\cite{li2025quantifyingcorrelationsmachinelearning} provides a theoretical grounding and proposes iGAT.

\looseness=-1
\xhdr{Transferability of Adversarial Attacks.}
Transferability was observed in seminal work by \cite{szegedy2014intriguingpropertiesneuralnetworks}, subsequently leveraged in \cite{papernot2016transferabilitymachinelearningphenomena}, and expanded to substantially larger models by \cite{liu2017delvingtransferableadversarialexamples} and first theoretically analyzed in~\cite{demontis2019adversarial}. Some works show that flat adversarial samples yield more transferable attacks~\citep{fan2025bad, zheng2026seekingflatminimaoverdiversesurrogates, ge2023boostingadversarialtransferabilityachieving}. For foundation models, \cite{wei2022towards} introduces transferable attacks on vision transformers, while \cite{zou2023universaltransferableadversarialattacks} and \cite{liao2024amplegcglearninguniversaltransferable} show that attacks transfer across aligned LLMs, with \cite{ma2024improvingadversarialtransferabilitymodel} further showing that alignment itself can increase transferability.

\section{Conclusion}
\label{sec:conclusion}
\looseness=-1
In this work, we introduce Systemic Adversarial Robustness (\nameshort), which measures how many distinct attacks are needed to compromise a target fraction of a model pool. We additionally provide a simple algorithm to compute the \nameshort and analyze it by analogy to the set-cover problem. Empirically, we demonstrate that models (from MLPs on MNIST to modern foundation models) often do share a small number of exploitable vulnerabilities. Further, we show that distillation from a shared model exacerbates the shared vulnerabilities and thus the \nameshort. Finally, while we provide a principled measure, characterizing shared vulnerabilities remains an important and understudied problem.

\subsubsection*{Acknowledgments}

LE, MW and LS acknowledge the support of the UKRI Prosperity Partnership Scheme (FAIR) under EPSRC
Grant EP/V056883/1.

\subsection*{AI use statement}

In this work, we used generative AI tools for no tasks with required disclosure.
We have not used generative AI tools for any tasks with required disclosure.
Additionally, we used generative AI tools for creating or modifying scientific figures or images, creating or editing software code, brainstorming, editing a research paper to improve readability, and identifying relevant literature. We have reviewed all AI-assisted work. We reviewed LLM-generated code to ensure expected behavior and have validated the legitimacy of any identified relevant literature. We take responsibility for the final content of this work,
including text, claims or artifacts produced with the aid of generative AI.

\subsection*{Reproducibility statement}

All of our experiments are seeded and reproducible, with data used to generate figures preserved. However, while we commit to releasing the code after the review period to reproduce our findings upon publication, we do not intend to make the used attack vectors public due to the nature of this work.

\bibliography{iclr2027_conference}

@inproceedings{fan2025bad,
      title={Transferability Bound Theory: Exploring Relationship between Adversarial Transferability and Flatness}, 
      author={Mingyuan Fan and Xiaodan Li and Cen Chen and Wenmeng Zhou and Yaliang Li},
  booktitle    = {Proc. of NeurIPS},
  year         = {2024},
}

@misc{ma2024improvingadversarialtransferabilitymodel,
      title={Improving Adversarial Transferability via Model Alignment}, 
      author={Avery Ma and Amir-massoud Farahmand and Yangchen Pan and Philip Torr and Jindong Gu},
      year={2024},
      eprint={2311.18495},
      archivePrefix={arXiv},
      primaryClass={cs.LG},
      url={https://arxiv.org/abs/2311.18495}, 
}

@ARTICLE{zheng2026seekingflatminimaoverdiversesurrogates,
  author={Zheng, Meixi and Wu, Kehan and Fan, Yanbo and Huang, Rui and Wu, Baoyuan},
  journal={IEEE Transactions on Pattern Analysis and Machine Intelligence}, 
  title={Seeking Flat Minima Over Diverse Surrogates for Improved Adversarial Transferability: A Theoretical Framework and Algorithmic Instantiation}, 
  year={2026},
  volume={},
  number={},
  pages={1-18},
  doi={10.1109/TPAMI.2026.3679726}}

@misc{li2025quantifyingcorrelationsmachinelearning,
      title={Quantifying Correlations of Machine Learning Models}, 
      author={Yuanyuan Li and Neeraj Sarna and Yang Lin},
      year={2025},
      eprint={2502.03937},
      archivePrefix={arXiv},
      primaryClass={cs.LG},
      url={https://arxiv.org/abs/2502.03937}, 
}

@inproceedings{cui2022synergyofexperts,
 author = {Cui, Sen and ZHANG, Jingfeng and Liang, Jian and Han, Bo and Sugiyama, Masashi and Zhang, Changshui},
 booktitle = {Advances in Neural Information Processing Systems},
 editor = {S. Koyejo and S. Mohamed and A. Agarwal and D. Belgrave and K. Cho and A. Oh},
 pages = {32552--32567},
 publisher = {Curran Associates, Inc.},
 title = {Synergy-of-Experts: Collaborate to Improve Adversarial Robustness},
 url = {https://proceedings.neurips.cc/paper_files/paper/2022/file/d16152d53088ad779ffa634e7bf66166-Paper-Conference.pdf},
 volume = {35},
 year = {2022}
}

@misc{zou2023universaltransferableadversarialattacks,
      title={Universal and Transferable Adversarial Attacks on Aligned Language Models}, 
      author={Andy Zou and Zifan Wang and Nicholas Carlini and Milad Nasr and J. Zico Kolter and Matt Fredrikson},
      year={2023},
      eprint={2307.15043},
      archivePrefix={arXiv},
      primaryClass={cs.CL},
      url={https://arxiv.org/abs/2307.15043}, 
}

@misc{liao2024amplegcglearninguniversaltransferable,
      title={AmpleGCG: Learning a Universal and Transferable Generative Model of Adversarial Suffixes for Jailbreaking Both Open and Closed LLMs}, 
      author={Zeyi Liao and Huan Sun},
      year={2024},
      eprint={2404.07921},
      archivePrefix={arXiv},
      primaryClass={cs.CL},
      url={https://arxiv.org/abs/2404.07921}, 
}

@article{szpruch2025insuring,
  title={Insuring AI: Incentivising Safe and Secure Deployment of AI Workflows},
  author={Szpruch, Lukasz and Orfanoudaki, Agni and Maple, Carsten and Wicker, Matthew and Bengio, Yoshua and Lam, Kwok-Yan and Detyniecki, Marcin},
  journal={Available at SSRN 5505759},
  year={2025}
}

@article{bommasani2021opportunities,
  title={On the opportunities and risks of foundation models},
  author={Bommasani, Rishi and Hudson, Drew A and Adeli, Ehsan and Altman, Russ and Arora, Simran and von Arx, Sydney and Bernstein, Michael S and Bohg, Jeannette and Bosselut, Antoine and Brunskill, Emma and others},
  journal={arXiv preprint arXiv:2108.07258},
  year={2021}
}

@ARTICLE{yutysever2020surveryofautonomousdriving,
  author={Yurtsever, Ekim and Lambert, Jacob and Carballo, Alexander and Takeda, Kazuya},
  journal={IEEE Access}, 
  title={A Survey of Autonomous Driving: Common Practices and Emerging Technologies}, 
  year={2020},
  volume={8},
  number={},
  pages={58443-58469},
  doi={10.1109/ACCESS.2020.2983149}}

@misc{goodfellow2015explainingharnessingadversarialexamples,
      title={Explaining and Harnessing Adversarial Examples}, 
      author={Ian J. Goodfellow and Jonathon Shlens and Christian Szegedy},
      year={2015},
      eprint={1412.6572},
      archivePrefix={arXiv},
      primaryClass={stat.ML},
      url={https://arxiv.org/abs/1412.6572}, 
}

@misc{madry2019deeplearningmodelsresistant,
      title={Towards Deep Learning Models Resistant to Adversarial Attacks}, 
      author={Aleksander Madry and Aleksandar Makelov and Ludwig Schmidt and Dimitris Tsipras and Adrian Vladu},
      year={2019},
      eprint={1706.06083},
      archivePrefix={arXiv},
      primaryClass={stat.ML},
      url={https://arxiv.org/abs/1706.06083}, 
}

@article{kumar2024amplegcg,
  title={Amplegcg-plus: A strong generative model of adversarial suffixes to jailbreak llms with higher success rates in fewer attempts},
  author={Kumar, Vishal and Liao, Zeyi and Jones, Jaylen and Sun, Huan},
  journal={arXiv preprint arXiv:2410.22143},
  year={2024}
}

@article{mazeika2024harmbench,
  title={Harmbench: A standardized evaluation framework for automated red teaming and robust refusal},
  author={Mazeika, Mantas and Phan, Long and Yin, Xuwang and Zou, Andy and Wang, Zifan and Mu, Norman and Sakhaee, Elham and Li, Nathaniel and Basart, Steven and Li, Bo and others},
  journal={arXiv preprint arXiv:2402.04249},
  year={2024}
}

@inproceedings{katz2017reluplex,
  title={Reluplex: An efficient SMT solver for verifying deep neural networks},
  author={Katz, Guy and Barrett, Clark and Dill, David L and Julian, Kyle and Kochenderfer, Mykel J},
  booktitle={International conference on computer aided verification},
  pages={97--117},
  year={2017},
  organization={Springer}
}

@inproceedings{carlini2017towards,
  title={Towards evaluating the robustness of neural networks},
  author={Carlini, Nicholas and Wagner, David},
  booktitle={2017 ieee symposium on security and privacy (sp)},
  pages={39--57},
  year={2017},
  organization={Ieee}
}

@inproceedings{chen2018ead,
  title={Ead: elastic-net attacks to deep neural networks via adversarial examples},
  author={Chen, Pin-Yu and Sharma, Yash and Zhang, Huan and Yi, Jinfeng and Hsieh, Cho-Jui},
  booktitle={Proceedings of the AAAI conference on artificial intelligence},
  volume={32},
  number={1},
  year={2018}
}

@inproceedings{croce2020minimally,
  title={Minimally distorted adversarial examples with a fast adaptive boundary attack},
  author={Croce, Francesco and Hein, Matthias},
  booktitle={International conference on machine learning},
  pages={2196--2205},
  year={2020},
  organization={PMLR}
}

@misc{andriushchenko2020squareattackqueryefficientblackbox,
      title={Square Attack: a query-efficient black-box adversarial attack via random search}, 
      author={Maksym Andriushchenko and Francesco Croce and Nicolas Flammarion and Matthias Hein},
      year={2020},
      eprint={1912.00049},
      archivePrefix={arXiv},
      primaryClass={cs.LG},
      url={https://arxiv.org/abs/1912.00049}, 
}

@inproceedings{chen2020hopskipjumpattack,
  title={Hopskipjumpattack: A query-efficient decision-based attack},
  author={Chen, Jianbo and Jordan, Michael I and Wainwright, Martin J},
  booktitle={2020 ieee symposium on security and privacy (sp)},
  pages={1277--1294},
  year={2020},
  organization={IEEE}
}

@article{wood2023unified,
  title={A unified theory of diversity in ensemble learning},
  author={Wood, Danny and Mu, Tingting and Webb, Andrew M and Reeve, Henry WJ and Lujan, Mikel and Brown, Gavin},
  journal={Journal of machine learning research},
  volume={24},
  number={359},
  pages={1--49},
  year={2023}
}

@inproceedings{pang2019improving,
  title={Improving adversarial robustness via promoting ensemble diversity},
  author={Pang, Tianyu and Xu, Kun and Du, Chao and Chen, Ning and Zhu, Jun},
  booktitle={International Conference on Machine Learning},
  pages={4970--4979},
  year={2019},
  organization={PMLR}
}

@article{yang2021trs,
  title={Trs: Transferability reduced ensemble via promoting gradient diversity and model smoothness},
  author={Yang, Zhuolin and Li, Linyi and Xu, Xiaojun and Zuo, Shiliang and Chen, Qian and Zhou, Pan and Rubinstein, Benjamin and Zhang, Ce and Li, Bo},
  journal={Advances in Neural Information Processing Systems},
  volume={34},
  pages={17642--17655},
  year={2021}
}

@misc{kariyappa2019improvingadversarialrobustnessensembles,
      title={Improving Adversarial Robustness of Ensembles with Diversity Training}, 
      author={Sanjay Kariyappa and Moinuddin K. Qureshi},
      year={2019},
      eprint={1901.09981},
      archivePrefix={arXiv},
      primaryClass={stat.ML},
      url={https://arxiv.org/abs/1901.09981}, 
}

@article{yang2020dverge,
  title={Dverge: diversifying vulnerabilities for enhanced robust generation of ensembles},
  author={Yang, Huanrui and Zhang, Jingyang and Dong, Hongliang and Inkawhich, Nathan and Gardner, Andrew and Touchet, Andrew and Wilkes, Wesley and Berry, Heath and Li, Hai},
  journal={Advances in Neural Information Processing Systems},
  volume={33},
  pages={5505--5515},
  year={2020}
}

@misc{szegedy2014intriguingpropertiesneuralnetworks,
      title={Intriguing properties of neural networks}, 
      author={Christian Szegedy and Wojciech Zaremba and Ilya Sutskever and Joan Bruna and Dumitru Erhan and Ian Goodfellow and Rob Fergus},
      year={2014},
      eprint={1312.6199},
      archivePrefix={arXiv},
      primaryClass={cs.CV},
      url={https://arxiv.org/abs/1312.6199}, 
}

@misc{papernot2016transferabilitymachinelearningphenomena,
      title={Transferability in Machine Learning: from Phenomena to Black-Box Attacks using Adversarial Samples}, 
      author={Nicolas Papernot and Patrick McDaniel and Ian Goodfellow},
      year={2016},
      eprint={1605.07277},
      archivePrefix={arXiv},
      primaryClass={cs.CR},
      url={https://arxiv.org/abs/1605.07277}, 
}

@misc{liu2017delvingtransferableadversarialexamples,
      title={Delving into Transferable Adversarial Examples and Black-box Attacks}, 
      author={Yanpei Liu and Xinyun Chen and Chang Liu and Dawn Song},
      year={2017},
      eprint={1611.02770},
      archivePrefix={arXiv},
      primaryClass={cs.LG},
      url={https://arxiv.org/abs/1611.02770}, 
}

@inproceedings{demontis2019adversarial,
  title={Why do adversarial attacks transfer? explaining transferability of evasion and poisoning attacks},
  author={Demontis, Ambra and Melis, Marco and Pintor, Maura and Jagielski, Matthew and Biggio, Battista and Oprea, Alina and Nita-Rotaru, Cristina and Roli, Fabio},
  booktitle={28th USENIX security symposium (USENIX security 19)},
  pages={321--338},
  year={2019}
}

@inproceedings{wei2022towards,
  title={Towards transferable adversarial attacks on vision transformers},
  author={Wei, Zhipeng and Chen, Jingjing and Goldblum, Micah and Wu, Zuxuan and Goldstein, Tom and Jiang, Yu-Gang},
  booktitle={Proceedings of the AAAI conference on artificial intelligence},
  volume={36},
  number={3},
  pages={2668--2676},
  year={2022}
}

@misc{ge2023boostingadversarialtransferabilityachieving,
      title={Boosting Adversarial Transferability by Achieving Flat Local Maxima}, 
      author={Zhijin Ge and Hongying Liu and Xiaosen Wang and Fanhua Shang and Yuanyuan Liu},
      year={2023},
      eprint={2306.05225},
      archivePrefix={arXiv},
      primaryClass={cs.CV},
      url={https://arxiv.org/abs/2306.05225}, 
}

@inproceedings{croce2020reliable,
  title={Reliable evaluation of adversarial robustness with an ensemble of diverse parameter-free attacks},
  author={Croce, Francesco and Hein, Matthias},
  booktitle={International conference on machine learning},
  pages={2206--2216},
  year={2020},
  organization={PMLR}
}

@article{
algorithmicmonoculture,
author = {Jon Kleinberg  and Manish Raghavan },
title = {Algorithmic monoculture and social welfare},
journal = {Proceedings of the National Academy of Sciences},
volume = {118},
number = {22},
pages = {e2018340118},
year = {2021},
doi = {10.1073/pnas.2018340118},
URL = {https://www.pnas.org/doi/abs/10.1073/pnas.2018340118},
eprint = {https://www.pnas.org/doi/pdf/10.1073/pnas.2018340118}}

@inproceedings{szegedy2014intriguing,
  title     = {Intriguing Properties of Neural Networks},
  author    = {Szegedy, Christian and Zaremba, Wojciech and Sutskever, Ilya and Bruna, Joan and Erhan, Dumitru and Goodfellow, Ian and Fergus, Rob},
  booktitle = {International Conference on Learning Representations (ICLR)},
  year      = {2014},
  url       = {https://arxiv.org/abs/1312.6199}
}

@article{papernot2016transferability,
  title   = {Transferability in Machine Learning: From Phenomena to Black-Box Attacks Using Adversarial Samples},
  author  = {Papernot, Nicolas and McDaniel, Patrick and Goodfellow, Ian},
  journal = {arXiv preprint arXiv:1605.07277},
  year    = {2016}
}

@inproceedings{liu2017delving,
  title     = {Delving into Transferable Adversarial Examples and Black-box Attacks},
  author    = {Liu, Yanpei and Chen, Xinyun and Liu, Chang and Song, Dawn},
  booktitle = {International Conference on Learning Representations (ICLR)},
  year      = {2017},
  url       = {https://arxiv.org/abs/1611.02770}
}

@inproceedings{dong2018boosting,
  title     = {Boosting Adversarial Attacks with Momentum},
  author    = {Dong, Yinpeng and Liao, Fangzhou and Pang, Tianyu and Su, Hang and Zhu, Jun and Hu, Xiaolin and Li, Jianguo},
  booktitle = {Proceedings of the IEEE Conference on Computer Vision and Pattern Recognition (CVPR)},
  pages     = {9185--9193},
  year      = {2018}
}

@inproceedings{mao2022transfer,
  title     = {Transfer Attacks Revisited: A Large-Scale Empirical Study in Real Computer Vision Settings},
  author    = {Mao, Yuhao and Fu, Chong and Wang, Saizhuo and Ji, Shouling and Zhang, Xuhong and Liu, Zhenguang and Zhou, Jun and Liu, Alex X. and Beyah, Raheem and Wang, Ting},
  booktitle = {2022 IEEE Symposium on Security and Privacy (SP)},
  pages     = {1423--1439},
  publisher = {IEEE},
  year      = {2022},
  doi       = {10.1109/SP46214.2022.9833783}
}

@inproceedings{chen2024rethinking,
  title     = {Rethinking Model Ensemble in Transfer-based Adversarial Attacks},
  author    = {Chen, Huanran and Zhang, Yichi and Dong, Yinpeng and Yang, Xiao and Su, Hang and Zhu, Jun},
  booktitle = {International Conference on Learning Representations (ICLR)},
  year      = {2024},
  url       = {https://arxiv.org/abs/2303.09105}
}

@article{zou2023universal,
  title   = {Universal and Transferable Adversarial Attacks on Aligned Language Models},
  author  = {Zou, Andy and Wang, Zifan and Carlini, Nicholas and Nasr, Milad and Kolter, J. Zico and Fredrikson, Matt},
  journal = {arXiv preprint arXiv:2307.15043},
  year    = {2023}
}

@article{slavik1997improved,
  title={Improved performance of the greedy algorithm for partial cover},
  author={Slav{\'\i}k, Petr},
  journal={Information Processing Letters},
  volume={64},
  number={5},
  pages={251--254},
  year={1997},
  publisher={Elsevier}
}

@book{kearns1994computational,
author = {Kearns, Michael J. and Vazirani, Umesh V.},
title = {An introduction to computational learning theory},
year = {1994},
isbn = {0262111934},
publisher = {MIT Press},
address = {Cambridge, MA, USA}
}

@article{chvatal1979greedy,
  title={A greedy heuristic for the set-covering problem},
  author={Chvatal, Vasek},
  journal={Mathematics of operations research},
  volume={4},
  number={3},
  pages={233--235},
  year={1979},
  publisher={INFORMS}
}

@article{goundan2007revisiting,
  title={Revisiting the greedy approach to submodular set function maximization},
  author={Goundan, Pranava R and Schulz, Andreas S},
  journal={Optimization online},
  volume={1},
  year={2007}
}

@article{xu2024survey,
  title={A survey on knowledge distillation of large language models},
  author={Xu, Xiaohan and Li, Ming and Tao, Chongyang and Shen, Tao and Cheng, Reynold and Li, Jinyang and Xu, Can and Tao, Dacheng and Zhou, Tianyi},
  journal={arXiv preprint arXiv:2402.13116},
  year={2024}
}

@article{geer2003cyberinsecurity,
  title={CyberInsecurity: The cost of monopoly--how the dominance of Microsoft’s products poses a risk to security},
  author={Geer, Daniel and Bace, Rebecca and Gutmann, Peter and Metzger, Perry and Pfleeger, C and Querterman, J and Scheier, B},
  journal={Computer \& Communications Industry Association Report},
  year={2003}
}

@article{toups2023ecosystem,
  title={Ecosystem-level analysis of deployed machine learning reveals homogeneous outcomes},
  author={Toups, Connor and Bommasani, Rishi and Creel, Kathleen and Bana, Sarah and Jurafsky, Dan and Liang, Percy S},
  journal={Advances in Neural Information Processing Systems},
  volume={36},
  pages={51178--51201},
  year={2023}
}

@article{goel2025great,
  title={Great models think alike and this undermines ai oversight},
  author={Goel, Shashwat and Struber, Joschka and Auzina, Ilze Amanda and Chandra, Karuna K and Kumaraguru, Ponnurangam and Kiela, Douwe and Prabhu, Ameya and Bethge, Matthias and Geiping, Jonas},
  journal={arXiv preprint arXiv:2502.04313},
  year={2025}
}

@article{kim2025correlated,
  title={Correlated errors in large language models},
  author={Kim, Elliot and Garg, Avi and Peng, Kenny and Garg, Nikhil},
  journal={arXiv preprint arXiv:2506.07962},
  year={2025}
}

@article{guo2025deepseek,
  title={DeepSeek-R1 incentivizes reasoning in LLMs through reinforcement learning},
  author={Guo, Daya and Yang, Dejian and Zhang, Haowei and Song, Junxiao and Wang, Peiyi and Zhu, Qihao and Xu, Runxin and Zhang, Ruoyu and Ma, Shirong and Bi, Xiao and others},
  journal={Nature},
  volume={645},
  number={8081},
  pages={633--638},
  year={2025},
  publisher={Nature Publishing Group UK London}
}

@article{cloud2025subliminal,
  title={Subliminal learning: Language models transmit behavioral traits via hidden signals in data},
  author={Cloud, Alex and Le, Minh and Chua, James and Betley, Jan and Sztyber-Betley, Anna and Hilton, Jacob and Marks, Samuel and Evans, Owain},
  journal={arXiv preprint arXiv:2507.14805},
  year={2025}
}

@article{bai2023qwen,
  title={Qwen technical report},
  author={Bai, Jinze and Bai, Shuai and Chu, Yunfei and Cui, Zeyu and Dang, Kai and Deng, Xiaodong and Fan, Yang and Ge, Wenbin and Han, Yu and Huang, Fei and others},
  journal={arXiv preprint arXiv:2309.16609},
  year={2023}
}

@article{grattafiori2024llama,
  title={The llama 3 herd of models},
  author={Grattafiori, Aaron and Dubey, Abhimanyu and Jauhri, Abhinav and Pandey, Abhinav and Kadian, Abhishek and Al-Dahle, Ahmad and Letman, Aiesha and Mathur, Akhil and Schelten, Alan and Vaughan, Alex and others},
  journal={arXiv preprint arXiv:2407.21783},
  year={2024}
}

@article{sadasivan2024fast,
  title={Fast adversarial attacks on language models in one gpu minute},
  author={Sadasivan, Vinu Sankar and Saha, Shoumik and Sriramanan, Gaurang and Kattakinda, Priyatham and Chegini, Atoosa and Feizi, Soheil},
  journal={arXiv preprint arXiv:2402.15570},
  year={2024}
}

@inproceedings{liu2025autodan,
  title={Autodan-turbo: A lifelong agent for strategy self-exploration to jailbreak llms},
  author={Liu, Xiaogeng and Li, Peiran and Suh, G Edward and Vorobeychik, Yevgeniy and Mao, Zhuoqing and Jha, Somesh and McDaniel, Patrick and Sun, Huan and Li, Bo and Xiao, Chaowei},
  booktitle={International Conference on Learning Representations},
  volume={2025},
  pages={10313--10360},
  year={2025}
}

@inproceedings{sun2025iterative,
  title={Iterative self-tuning llms for enhanced jailbreaking capabilities},
  author={Sun, Chung-En and Liu, Xiaodong and Yang, Weiwei and Weng, Tsui-Wei and Cheng, Hao and San, Aidan and Galley, Michel and Gao, Jianfeng},
  booktitle={Proceedings of the 2025 Conference of the Nations of the Americas Chapter of the Association for Computational Linguistics: Human Language Technologies (Volume 1: Long Papers)},
  pages={5768--5786},
  year={2025}
}
\bibliographystyle{abbrvnat}

\clearpage
\appendix
\section{Appendix}

\subsection{Proofs}

\approxratio*
\begin{proof}
The lower bound holds because $X$ satisfies the constraint in \eqref{eq:psc}. For the upper bound let $X^\star$ attain $k^\star$ and set $U^\star := \bigcup_{x' \in X^\star} \mathcal C(x')$, so $|U^\star| \geq \tau$. Write $\mathcal B_t$ for the covered set after $t$ iterations and $r_t := \max\{0, \tau - |\mathcal B_t|\}$ for the remaining requirement, with $r_0 = \tau$. As noted above the fallback is never executed, so at iteration $t$ the uncovered set is $\mathcal I_{t-1} = [M] \setminus \mathcal B_{t-1}$ and contains at least $|U^\star| - |\mathcal B_{t-1}| \geq r_{t-1}$ elements of $U^\star$. These are covered by the $k^\star$ vectors of $X^\star$, so some $x^\star \in X^\star$ satisfies $|\mathcal C(x^\star) \cap \mathcal I_{t-1}| \geq r_{t-1} / k^\star$, and by Definition~\ref{def:alpha} the accepted vector covers $|F_t| \geq \alpha\, r_{t-1} / k^\star$ new models. Hence $r_t \leq r_{t-1}(1 - \alpha / k^\star)$ and, iterating, $r_t \leq \tau (1 - \alpha / k^\star)^t < \tau e^{-\alpha t / k^\star}$ for all $t \geq 1$. The right-hand side is at most $1$ once $t \geq (k^\star / \alpha) \ln \tau$, at which point $r_t < 1$ and, being an integer, $r_t = 0$. The loop therefore stops at $\hat k \leq \max\{1, \lceil (k^\star / \alpha) \ln \tau \rceil\} \leq k^\star + (k^\star / \alpha) \ln \tau$, using $k^\star \geq 1$.
\end{proof}

\subsection{Systemic Risk Evaluation Algorithms}

\xhdr{Joint PGD for Image Classification Models}

For an image $x \in [0,1]^d$, with pixel intensities scaled to $[0,1]$, the
admissible perturbation set is the $\ell_\infty$ ball
$\mathcal{B}_\varepsilon(x)$ within the valid pixel range,
\begin{equation}
  \mathcal{B}_\varepsilon(x)
      = \bigl\{\, x' \in [0,1]^d : \|x' - x\|_\infty \le \varepsilon \,\bigr\}.
\end{equation}

\paragraph{Objective.}
Let $z_j(x') \in \mathbb{R}^{|\mathcal{Y}|}$ denote the
logits of $f_j$, so that $f_j(x') = \arg\max_{c} z_j(x')_c$. Given an instance
$x$ and a set of uncovered models $\mathcal{I} \subseteq [M]$, joint
PGD seeks a single admissible input that compromises every uncovered model at
once, by maximizing their mean cross-entropy with respect to their own clean
predictions:
\begin{equation}
  \label{eq:joint-loss}
  \max_{x' \in T_x} \;
  \mathcal{L}_{\mathcal{I}}(x'),
  \qquad
  \mathcal{L}_{\mathcal{I}}(x') = \frac{1}{|\mathcal{I}|}
  \sum_{j \in \mathcal{I}} \mathrm{CE}\bigl(z_j(x'),\, f_j(x)\bigr).
\end{equation}
This is an untargeted, white-box attack: every model in $\mathcal{I}$
contributes an equal share of the gradient, and models outside
$\mathcal{I}$, which are already covered, contribute nothing.

\paragraph{Optimization.}
We solve \eqref{eq:joint-loss} with $\ell_\infty$ projected gradient ascent.
Let $\Pi_{\mathcal B_\varepsilon(x)}$ be the projection onto $\mathcal B_\varepsilon(x)$, which clips each coordinate
$x'_i$ to $[-\varepsilon, \varepsilon]$ around $x_i$ and then to $[0,1]$. Both
constraints are coordinate-wise intervals that intersect, so these two clips
give the exact projection. Starting from a uniform random point
$x'^{(0)} = \Pi_{\mathcal B_\varepsilon(x)}(x + u)$ with
$u \sim \mathcal{U}([-\varepsilon,\varepsilon]^d)$, each step is
\begin{equation}
  \label{eq:pgd-step}
  x'^{(t+1)} = \Pi_{\mathcal B_\varepsilon(x)}\Bigl( x'^{(t)} + \eta \,
  \operatorname{sign}\bigl( \nabla_{x'}\, \mathcal{L}_{\mathcal{I}}(x'^{(t)}) \bigr) \Bigr),
  \qquad t = 0,\dots,n_{\mathrm{it}}-1.
\end{equation}
We run $n_{\mathrm{rs}}$ independent random restarts and return the final
iterate of the restart with the highest objective
$\mathcal{L}_{\mathcal{I}}(x'^{(n_{\mathrm{it}})})$ (\Cref{alg:joint-pgd}).
MNIST attack hyperparameters are shown in \Cref{tab:hp-joint-pgd}.

\begin{algorithm}[t]
\caption{Joint PGD attack $\mathcal{A}_{\mathcal{I}}(x) = \textsc{JointPGD}(x, \mathcal{I})$}
\label{alg:joint-pgd}
\begin{algorithmic}[1]
\Require instance $x \in [0,1]^d$; active models $\mathcal{I} \subseteq [M]$;
         budget $\varepsilon$; steps $n_{\mathrm{it}}$; step size $\eta$;
         restarts $n_{\mathrm{rs}}$
\State $\ell^\star \gets -\infty$
\For{$r = 1, \dots, n_{\mathrm{rs}}$}
  \State $x' \gets \Pi_{T_x}(x + u)$, \quad $u \sim \mathcal{U}([-\varepsilon,\varepsilon]^d)$
  \For{$t = 1, \dots, n_{\mathrm{it}}$}
    \State $x' \gets \Pi_{T_x}\bigl(x' + \eta\,\operatorname{sign}(\nabla_{x'} \mathcal{L}_{\mathcal{I}}(x'))\bigr)$
  \EndFor
  \If{$\mathcal{L}_{\mathcal{I}}(x') > \ell^\star$}
    \State $x^\star \gets x'$; \quad $\ell^\star \gets \mathcal{L}_{\mathcal{I}}(x')$
  \EndIf
\EndFor
\State \Return $x^\star$
\end{algorithmic}
\end{algorithm}

\begin{table}[t]
  \centering
  \caption{Hyperparameters of joint PGD on MNIST.}
  \label{tab:hp-joint-pgd}
  \small
  \begin{tabular}{@{}l p{0.52\linewidth}@{}}
    \toprule
    Hyperparameter & Value \\
    \midrule
    Steps $n_{\mathrm{it}}$ & 40 \\
    Step size $\eta$ & $2.5\,\varepsilon / n_{\mathrm{it}} = \varepsilon/16$ \\
    Random restarts $n_{\mathrm{rs}}$ & 3 \\
    \bottomrule
  \end{tabular}
\end{table}

\xhdr{Joint BEAST for Large Language Models}

For LLMs, an instance $x$ is a harmful request from AdvBench, paired with an
affirmative target response $y_x$ (e.g.\ ``Sure, here is \dots''). The
attacker appends an adversarial suffix $s$ to the request, written
$x \oplus s$ while adhering to an attack budget $L$. A suffix $s$ is conforming with the budget constraint iff $|s|\leq L$ under a fixed canonical tokenizer.

\paragraph{Objective.}
Each model $f_j$ is an autoregressive LLM with its own tokenizer. Let $p_j(\cdot \mid \cdot)$ be its next-token distribution. Following BEAST \citep{sadasivan2024fast}, we optimise a surrogate: the per-token
negative log-likelihood of the target response,
\begin{equation}
  \ell_j(x') = -\frac{1}{|y_x|_j} \sum_{t=1}^{|y_x|_j}
      \log p_j\bigl(y_{x,t} \,\big|\, c_j(x'),\, y_{x,<t}\bigr),
\end{equation}
where $y_x$ is tokenized with $f_j$'s own tokenizer and $|y_x|_j$ is its
length. Given an instance $x$ and a set of uncovered models
$\mathcal{I} \subseteq [M]$, joint BEAST \cite{} seeks a suffix that makes every
uncovered model likely to begin its answer with the target, by minimizing
their mean target negative log likelihood (NLL).

\paragraph{Optimization.}
We optimize for the NLL with a gradient-free beam search over suffix
tokens in which every active model proposes candidates
(\Cref{alg:joint-beast}). The beam starts from $B$ seed suffixes of $L_0$
tokens, generated for the request $x$ by AmpleGCG \citep{liao2024amplegcglearninguniversaltransferable}. The search then extends
every suffix by one token per step for $L - L_0$ steps. At each step, every
pair of beam suffix $s$ and active model $j \in \mathcal{I}$ proposes
$n_{j,s}$ distinct tokens, sampled without replacement from
\begin{equation}
  \label{eq:beast-proposal}
  q_{j,s} = \operatorname{softmax}\bigl( z_j(x \oplus s) / T \bigr),
\end{equation}
where $z_j(x \oplus s)$ are the logits of $f_j$ at the last position of
$c_j(x \oplus s)$, that is, where its response would begin. Each proposal is decoded to text, re-encoded with the
canonical tokenizer, appended to $s$, and truncated to $L_0 + t$ canonical
tokens. The quotas $n_{j,s}$ are split as
evenly as possible over all $B\,|\mathcal{I}|$ source pairs, so that no
single model dominates the candidate set. Every candidate is then scored on
every active model, and the $B$ candidates with the lowest mean NLL
$\mathcal{L}_{\mathcal{I}}$ form the next beam. After the final step, the
attack returns all $B$ suffixes in the beam, not only the best one.

\begin{algorithm}[t]
\caption{Joint BEAST attack $\mathcal{A}_{\mathcal{I}}(x) = \textsc{JointBEAST}(x, \mathcal{I})$}
\label{alg:joint-beast}
\begin{algorithmic}[1]
\Require instance $x$ with target $y_x$; active models $\mathcal{I} \subseteq [M]$;
         seed suffixes $s_1,\dots,s_B$ of $L_0$ tokens; suffix length $L$;
         query budget $Q$; temperature $T$
\State $\mathcal{S} \gets \{s_1, \dots, s_B\}$; score each on every $j \in \mathcal{I}$
\For{$t = 1, \dots, L - L_0$}
  \State $C_t \gets$ number of candidates the budget allots to step $t$
  \State $\mathcal{K} \gets \emptyset$
  \For{each $s \in \mathcal{S}$ and $j \in \mathcal{I}$}
    \State draw $n_{j,s}$ distinct tokens $v \sim q_{j,s}$ \Comment{\eqref{eq:beast-proposal}}
    \State add $s \oplus v$, truncated to $L_0 + t$ tokens, to $\mathcal{K}$
  \EndFor
  \State remove duplicates, keeping $C_t$ candidates in $\mathcal{K}$
  \State $\mathcal{S} \gets$ the $B$ suffixes $s \in \mathcal{K}$ with the lowest $\mathcal{L}_{\mathcal{I}}(x \oplus s)$
\EndFor
\State \Return $\{\, x \oplus s : s \in \mathcal{S} \,\}$
\end{algorithmic}
\end{algorithm}

Hyperparameter details for all jointBEAST experiments can be found in \Cref{tab:hp-joint-beast}.

\begin{table}[t]
  \centering
  \caption{Hyperparameters of joint BEAST.}
  \label{tab:hp-joint-beast}
  \small
  \begin{tabular}{@{}l p{0.52\linewidth}@{}}
    \toprule
    Hyperparameter & Value \\
    \midrule
    Beam width $B$ & 10 \\
    Seed length $L_0$ & 20 tokens \\
    Suffix length $L$ & 30 tokens (as per canonical tokenizer) \\
    Seed generator and canonical tokenizer & AmpleGCG-llama2-sourced-llama2-7b-chat \\
    Temperature $T$ & 1 \\
    Query budget per call $Q$ & 15{,}000 \\
    Maximum calls $n_{\mathrm{calls}}$ & 4 (60{,}000 search queries) \\
    Samples & 25 AdvBench behaviors \\
    Judge & HarmBench-Llama-2-13b-cls \\
    \bottomrule
  \end{tabular}
\end{table}

\xhdr{Joint GCG for Large Language Models}

For joint GCG, an instance $x$ is a harmful request from HarmBench. As for
joint BEAST, the attacker appends an adversarial suffix $s$ to the request,
written $x \oplus s$. Unlike joint BEAST, the suffix length is not
constrained explicitly: it is set by the initial suffix, and each
optimization step substitutes a single token.

\paragraph{Objective.}
Joint GCG optimises an untargeted surrogate. Instead of pulling each model
towards an affirmative target, it pushes each model away from a fixed refusal
$r$ (``I cannot assist with that request.''). For model $f_j$, the surrogate
is the per-token negative log-likelihood of the refusal.
Given an instance $x$ and a set of
uncovered models $\mathcal{I} \subseteq [M]$, joint GCG seeks a suffix that
makes every uncovered model unlikely to refuse, by maximizing their mean
refusal NLL:
\begin{equation}
  \label{eq:joint-gcg}
  \max_{s} \;
  \mathcal{L}_{\mathcal{I}}(x \oplus s),
  \qquad
  \mathcal{L}_{\mathcal{I}}(x') = \frac{1}{|\mathcal{I}|}
  \sum_{j \in \mathcal{I}} \ell_j(x').
\end{equation}

\paragraph{Optimization.}
We optimize \eqref{eq:joint-gcg} with greedy coordinate gradient (GCG)
search \citep{zou2023universal}, adapted to a pool of models that do not share
a tokenizer (\Cref{alg:joint-gcg}). Suffixes are therefore exchanged between
models as text and never as token ids. At each step, one model
$j \in \mathcal{I}$ proposes candidates, and the proposing model rotates over
$\mathcal{I}$ so that every active model proposes for an (almost) equal
number of steps. The proposing model tokenizes the current suffix $s$ with
its own tokenizer as $(v_1, \dots, v_n)$ and computes the gradient of its
surrogate $\tilde{\ell}_j$ with respect to the one-hot encoding
$e_i \in \{0,1\}^{|\mathcal{V}_j|}$ of each suffix token $v_i$. Each of $n_{\mathrm{cand}}$ candidates is formed by
choosing a position $i$ uniformly at random and replacing $v_i$ with a token
drawn uniformly from the $n_{\mathrm{top}}$ tokens with the largest gradient
entries at that position. The candidate with the highest mean refusal NLL
$\mathcal{L}_{\mathcal{I}}$ replaces the current suffix if it improves
on the best objective found so far.

\begin{algorithm}[t]
\caption{Joint GCG attack $\mathcal{A}_{\mathcal{I}}(x) = \textsc{JointGCG}(x, \mathcal{I})$}
\label{alg:joint-gcg}
\begin{algorithmic}[1]
\Require instance $x$; active models $\mathcal{I} \subseteq [M]$; initial suffix $s_0$;
         steps $n_{\mathrm{it}}$; candidates per step $n_{\mathrm{cand}}$;
         top-$n_{\mathrm{top}}$ substitutions
\State $s \gets s_0$; \quad $\ell^\star \gets -\infty$
\For{$t = 1, \dots, n_{\mathrm{it}}$}
  \State $j \gets$ proposing model for step $t$ \Comment{rotates over $\mathcal{I}$}
  \State tokenize $s$ with $f_j$'s tokenizer as $(v_1, \dots, v_n)$
  \State $g_i \gets \nabla_{e_i}\, \tilde{\ell}_j(x \oplus s)$ for $i = 1, \dots, n$
  \State $\mathcal{K} \gets \{s\}$
  \For{$c = 1, \dots, n_{\mathrm{cand}}$}
    \State $i \sim \mathcal{U}(\{1, \dots, n\})$; \quad $v \sim \mathcal{U}\bigl(\operatorname{top}_{n_{\mathrm{top}}}(g_i)\bigr)$
    \State add the text of $(v_1, \dots, v_{i-1}, v, v_{i+1}, \dots, v_n)$ to $\mathcal{K}$
  \EndFor
  \State $s' \gets \arg\max_{u \in \mathcal{K}} \tilde{\mathcal{L}}_{\mathcal{I}}(x \oplus u)$
  \If{$\tilde{\mathcal{L}}_{\mathcal{I}}(x \oplus s') > \ell^\star$}
    \State $s \gets s'$; \quad $\ell^\star \gets \tilde{\mathcal{L}}_{\mathcal{I}}(x \oplus s')$
  \EndIf
\EndFor
\State \Return $x \oplus s$
\end{algorithmic}
\end{algorithm}

In all joint GCG experiments we use the hyperparameter settings shown in \Cref{tab:hp-joint-gcg}.

\begin{table}[t]
  \centering
  \caption{Hyperparameters of joint GCG.}
  \label{tab:hp-joint-gcg}
  \small
  \begin{tabular}{@{}l p{0.55\linewidth}@{}}
    \toprule
    Hyperparameter & Value \\
    \midrule
    Steps per call $n_{\mathrm{it}}$ & 20 \\
    Candidates per step $n_{\mathrm{cand}}$ & 64 \\
    Top-$n_{\mathrm{top}}$ substitutions & 256 \\
    Samples & 25 HarmBench behaviors \\
    Judge & HarmBench-Llama-2-13b-cls \\
    \bottomrule
  \end{tabular}
\end{table}

\xhdr{Joint AutoDAN for Large Language Models}

For joint AutoDAN, an instance $x$ is a harmful request from HarmBench.
Unlike joint BEAST and joint GCG, the attack does not append a suffix.
Instead, an attacker LLM $g$ rewrites the request into a complete
natural-language prompt $u$ that preserves its objective. A prompt is
admissible if it is at most 6{,}000 characters long, reads as ordinary
natural language (at least five words and at least 85\% ASCII characters),
and does not duplicate an earlier prompt up to whitespace and case.

\paragraph{Objective.}
The attack treats the target models as black boxes and uses only their text
responses. For each prompt $u$, every model in the pool generates a greedy
response $f_j(u)$, which the HarmBench classifier labels with a binary
outcome $b_j(u) \in \{0,1\}$ and a confidence
$\sigma_j(u) \in [0,1]$. The confidence is the classifier's normalised probability of
answering "yes" or "no" at its first output token. Given an instance $x$ and a set of models
$\mathcal{I} \subseteq [M]$, joint AutoDAN seeks a prompt that compromises as
many models in $\mathcal{I}$ as possible. It ranks prompts lexicographically by
\begin{equation}
  \label{eq:joint-autodan}
  R_{\mathcal{I}}(u) = \Bigl(\,
    \sum_{j \in \mathcal{I}} b_j(u),\;\;
    \frac{1}{|\mathcal{I}|} \sum_{j \in \mathcal{I}} \sigma_j(u),\;\;
    q_{0.2}\bigl(\{\sigma_j(u)\}_{j \in \mathcal{I}}\bigr)
  \Bigr),
\end{equation}
that is, by the number of compromised models, then the mean confidence, then
the lower 20\% quantile of the confidences. The mean and the quantile break
ties between prompts that compromise the same number of models, and favour
prompts that move the least-compromised models.

\paragraph{Optimization.}
We optimize \eqref{eq:joint-autodan} with a strategy-guided search adapted
from AutoDAN-Turbo \citep{liu2025autodan}, in which the attacker revises
prompts using feedback from one \emph{anchor} model at a  in a round robin manner
(\Cref{alg:joint-autodan}). The attacker is guided by a frozen library of
abstract jailbreak strategies $\Sigma$ from the AutoDAN-Turbo warm-up library. The search first evaluates the unmodified request $x$ on every model.
The attacker then produces a cold-start rewrite of $x$ without strategy
guidance, which becomes the first candidate. The anchor then cycles once over
all $M$ models in the pool. For anchor $a$, two parents are selected from all
candidates found so far: the \emph{systemic} parent $u^{\mathrm{sys}}$, which
maximizes $R_{\mathcal{I}}$, and the \emph{specialist} parent
$u^{\mathrm{spec}}$, which is best on the anchor alone, ranked by
$(b_a, \sigma_a, R_{\mathcal{I}})$. The strategies in $\Sigma$ are ranked by
lexical cosine similarity between their definitions and the request, the
systemic parent, and the anchor's response to it. The attacker then generates
$n_{\mathrm{p}}$ proposals. Proposals alternate between the systemic and the
specialist parent. Each of the first $n_{\mathrm{p}} - 1$ proposals applies one
of the top-ranked strategies, and the last combines the top two. For every
proposal, the attacker receives the request, the parent prompt, the anchor's
response to the systemic parent, whether that response was judged harmful,
and the strategy definitions. It returns a revised prompt, which is resampled
up to three times if it is not admissible. Every proposal is evaluated on
every model in the pool and added to the candidates.

\begin{algorithm}[t]
\caption{Joint AutoDAN attack $\mathcal{A}_{\mathcal{I}}(x) = \textsc{JointAutoDAN}(x, \mathcal{I})$}
\label{alg:joint-autodan}
\begin{algorithmic}[1]
\Require instance $x$; objective models $\mathcal{I} \subseteq [M]$; attacker $g$;
         strategy library $\Sigma$; proposals per anchor $n_{\mathrm{p}}$;
         systemic archive size $n_{\mathrm{sys}}$
\State evaluate $x$ on every $f_j$, $j \in [M]$
\State $a_1, \dots, a_M \gets$ cyclic order of $[M]$
\State $u_0 \gets g\bigl(x, f_{a_1}(x)\bigr)$; evaluate $u_0$ on every $f_j$; \quad $\mathcal{P} \gets \{u_0\}$
\For{$m = 1, \dots, M$}
  \State $u^{\mathrm{sys}} \gets \arg\max_{u \in \mathcal{P}} R_{\mathcal{I}}(u)$
  \State $u^{\mathrm{spec}} \gets \arg\max_{u \in \mathcal{P}} \bigl(b_{a_m}(u), \sigma_{a_m}(u), R_{\mathcal{I}}(u)\bigr)$
  \State $\sigma_1, \sigma_2, \dots \gets$ strategies in $\Sigma$ ranked by similarity to $\bigl(x, u^{\mathrm{sys}}, f_{a_m}(u^{\mathrm{sys}})\bigr)$
  \For{$i = 1, \dots, n_{\mathrm{p}}$}
    \State $\bar{u} \gets u^{\mathrm{sys}}$ if $i$ is odd, else $u^{\mathrm{spec}}$
    \State $S \gets \{\sigma_i\}$ if $i < n_{\mathrm{p}}$, else $\{\sigma_1, \sigma_2\}$
    \State $u \gets g\bigl(x, \bar{u}, f_{a_m}(u^{\mathrm{sys}}), b_{a_m}(u^{\mathrm{sys}}), S\bigr)$
    \State evaluate $u$ on every $f_j$, $j \in [M]$; \quad $\mathcal{P} \gets \mathcal{P} \cup \{u\}$
  \EndFor
\EndFor
\State \Return the top $n_{\mathrm{sys}}$ prompts in $\mathcal{P}$ under $R_{\mathcal{I}}$, and the best $u \in \mathcal{P}$ for each $j \in \mathcal{I}$
\end{algorithmic}
\end{algorithm}

We show further hyperparameter details used in joint AutoDAN experiments in \Cref{tab:hp-joint-autodan}.

\begin{table}[t]
  \centering
  \caption{Hyperparameters of joint AutoDAN.}
  \label{tab:hp-joint-autodan}
  \small
  \begin{tabular}{@{}l p{0.52\linewidth}@{}}
    \toprule
    Hyperparameter & Value \\
    \midrule
    Attacker model $g$ & Qwen3.8-27B \\
    Sampling temperature & 0.8 \\
    Top-$p$ & 0.95 \\
    Maximum new tokens & 768 \\
    Resampling attempts & 3 \\
    Proposals per anchor $n_{\mathrm{p}}$ & 4 \\
    Systemic archive size $n_{\mathrm{sys}}$ & 2 \\
    Samples & 15 HarmBench behaviors \\
    Judge & HarmBench-Llama-2-13b-cls \\
    \bottomrule
  \end{tabular}
\end{table}

\subsection{Further Experimental Details}

\subsubsection{MNIST}\label{app:mnist}

For MNIST experiments, the base model pools are trained once, their configurations persisted and then reused across all experiments. All models, irrespective of pool, are gradient based and thus trained via standard gradient descent, using Adam, specifically. All models are trained to convergence via an early stopping mechanism, triggered when a held out validation loss starts to plateau (with a patience of 10 train steps). After early stopping is triggered, we recover the best performing epoch checkpoint.

Further details on the MNIST model pool composition can be found in \Cref{tab:mnist-pools}.

\begin{table}[t]
  \centering
  \caption{MNIST model pools. Each pool has 20 models, and each diversity level keeps what varies in the level below and adds one more factor. Every model is trained with early stopping and the best-validation weights are kept.}
  \label{tab:mnist-pools}
  \small
  \begin{tabular}{@{}l
                  >{\raggedright\arraybackslash}p{0.25\linewidth}
                  >{\raggedright\arraybackslash}p{0.38\linewidth}
                  >{\raggedright\arraybackslash}p{0.18\linewidth}@{}}
    \toprule
    Pool & Architecture & Training data and regularisation & Varies across models \\
    \midrule
    Low
      & MLPs, one hidden layer (256 units)
      & Shared training splits in fixed order, no augmentation or regularisation
      & Initialisation seed \\
    \addlinespace
    Medium
      & MLPs, one hidden layer (256 units)
      & diverse resampling of training split, random affine
        augmentation, weight decay
      & Seed, training regime \\
    \addlinespace
    High
      & 6 two-layer CNNs, 6 three-layer CNNs, 4 one-hidden-layer MLPs,
        4 \textit{diverse} CNN variants
      & Same as Medium
      & Seed, training regime, architecture family \\
    \bottomrule
  \end{tabular}
\end{table}

\subsubsection{Large Language Models}\label{app:llm_pools}

\begin{figure*}[t]
    \centering
    \includegraphics[width=\textwidth]{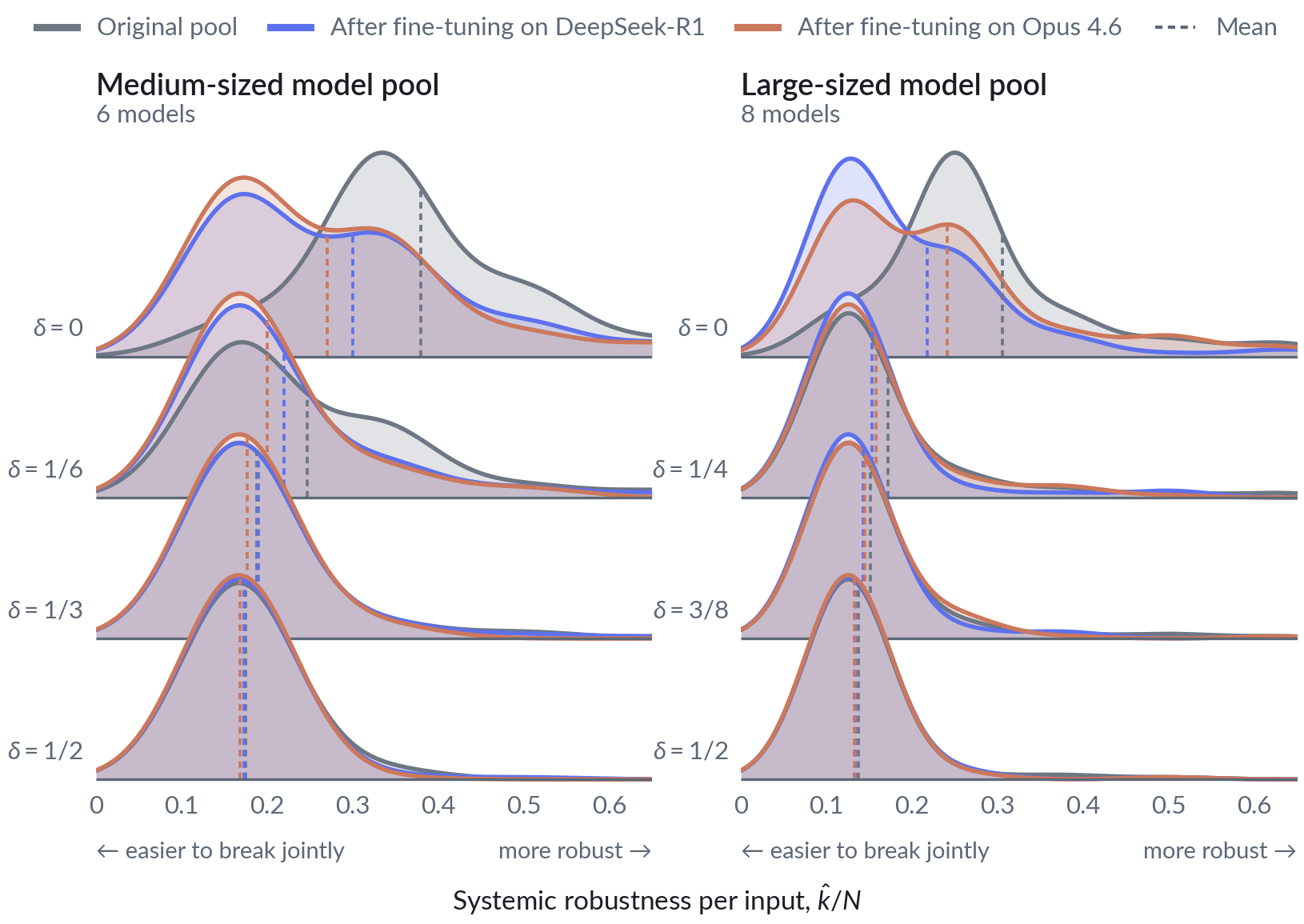}
    \vspace{-10pt}
    \caption{\textbf{Distillation lowers systemic robustness under AdvLLM + BEAST.} Left: the distribution of singleton-completed $\hat{k}/M$ at varying risk tolerances $\delta$ for the medium-sized pool ($M=6$), comparing the original pool, the pool after DeepSeek-R1 distillation, and the pool after subsequent Opus 4.6 distillation.
    Right: same as left but for the large-sized pool ($M=8$). Lower values mean
    that fewer suffixes cover the pool. Dashed lines denote means.}
    \label{fig:advllm-deepseek_opus_distill}
\end{figure*}

\begin{figure*}[t]
    \centering
    \includegraphics[width=\textwidth]{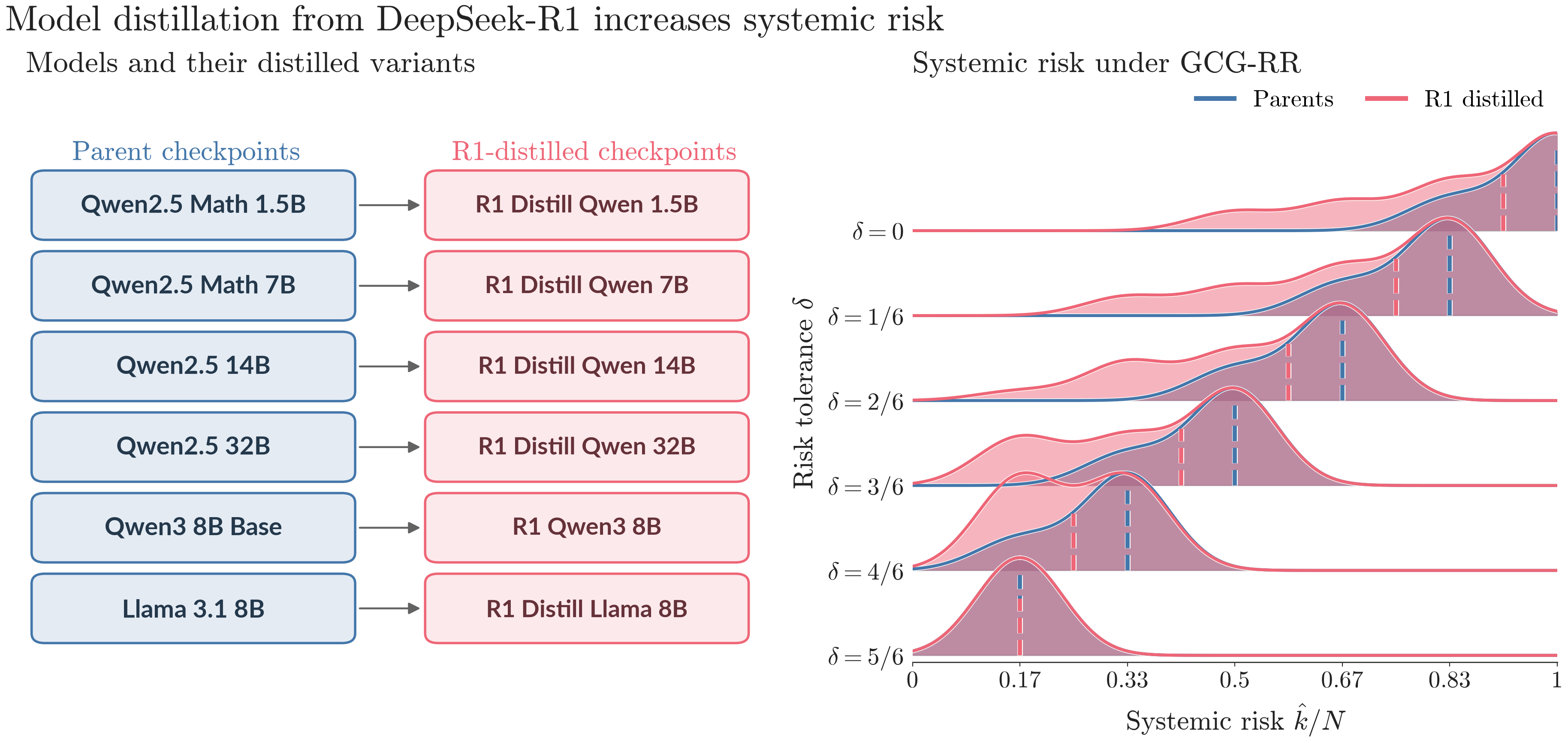}
    \vspace{-10pt}
    \caption{\textbf{Model distillation from DeepSeek-R1 increases raw
    systemic risk under GCG.} Left: the six parent--descendant lineages.
    Right: the distribution of singleton-completed $\hat{k}/6$  at each risk tolerance $\delta$. Lower values mean
    that fewer suffixes cover the pool. Dashed lines denote medians.}
    \label{fig:deepseek_lineage}
\end{figure*}
\vspace{-10pt}

\begin{figure*}[t]
    \centering
    \includegraphics[width=\textwidth]{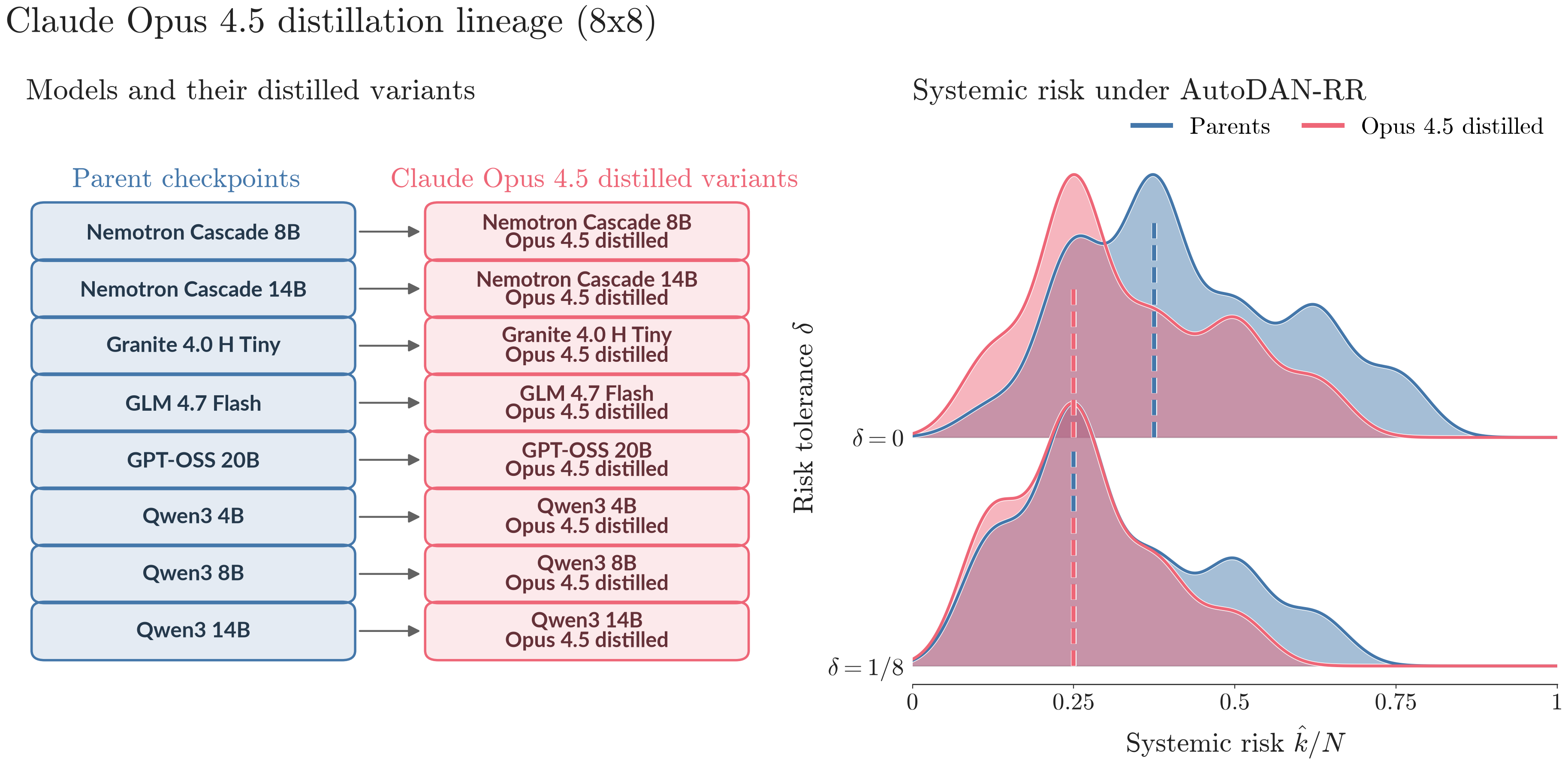}
    \vspace{-20pt}
    \caption{\small \looseness=-1 \textbf{Systemic risk of a Claude Opus 4.5 distilled lineage.}
    Left: eight parent checkpoints and their Opus-attributed descendants.
    Right: singleton-completed systemic risk under AutoDAN-RR. The plotted
    worst-case sets were selected separately for the parents ($25$ behaviors)
    and descendants ($22$ behaviors).}
    \label{fig:opus_lineage}
    \vspace{-20pt}
\end{figure*}

\begin{table}[t]
  \centering
  \caption{Mean systemic robustness per input, $\hat{k}/M$, over the 100 HarmBench behaviours
           (AmpleGCG bank + BEAST, thinking off), with the relative change from the original pool in
           parentheses. Lower values mean the pool is easier to break jointly.}
  \label{tab:khat-means}
  \begin{tabular}{@{}lcccc@{}}
    \toprule
    \multicolumn{5}{@{}l}{\textit{Medium-sized model pool} ($M = 6$)} \\
    & $\delta = 0$ & $\delta = 1/6$ & $\delta = 1/3$ & $\delta = 1/2$ \\
    \midrule
    Original pool                    & 0.543 & 0.385 & 0.270 & 0.203 \\
    DeepSeek-R1 & 0.512 ($-5.8\%$)  & 0.363 ($-5.6\%$)  & 0.267 ($-1.2\%$)  & 0.202 ($-0.8\%$)  \\
    R1 $\rightarrow$ Opus 4.6    & 0.288 ($-46.9\%$) & 0.208 ($-45.9\%$) & 0.180 ($-33.3\%$) & 0.168 ($-17.2\%$) \\
    \midrule
    \multicolumn{5}{@{}l}{\textit{Large-sized model pool} ($M = 8$)} \\
    & $\delta = 0$ & $\delta = 1/4$ & $\delta = 3/8$ & $\delta = 1/2$ \\
    \midrule
    Original pool                    & 0.676 & 0.438 & 0.340 & 0.255 \\
    DeepSeek-R1 & 0.509 ($-24.8\%$) & 0.303 ($-30.9\%$) & 0.238 ($-30.1\%$) & 0.186 ($-27.0\%$) \\
    R1 $\rightarrow$ Opus 4.6    & 0.478 ($-29.4\%$) & 0.279 ($-36.3\%$) & 0.225 ($-33.8\%$) & 0.175 ($-31.4\%$) \\
    \bottomrule
  \end{tabular}
\end{table}

Our experimental evaluation for large language models focuses on two distinct model pools. Both with roughly similar complexity in terms of model size but varying underlying foundation models. The \textit{Opus} pool consists of the following eight models: Nemotron-Cascade-8B-Thinking, Nemotron-Cascade-14B-Thinking, granite-4.0-h-tiny, GLM-4.7-Flash, gpt-oss-20b, Qwen3-4B-Thinking-2507-unsloth-bnb-4bit, Qwen3-8B-unsloth-bnb-4bit, Qwen3-14B. The \textit{Deepseek} pool is comprised of the below six models: Qwen2.5-Math-1.5B, Qwen2.5-Math-7B, Qwen2.5-14B, Qwen2.5-32B, Qwen3-8B-Base, Llama-3.1-8B. See \Cref{tab:pool-characteristics} for an overview. As the pool names suggest, the former is distilled on Opus 4.5 and the latter on DeepSeek-R1 to evaluate the impact of distillation on systemic robustness.

\begin{table}[t]
  \centering
  \caption{LLM pools. Parameters are the
    minimum--maximum over the pool, with the median in parentheses.}
  \label{tab:pool-characteristics}
  \footnotesize
  \setlength{\tabcolsep}{4pt}
  \begin{tabular}{@{}l c c
                  >{\raggedright\arraybackslash}p{0.5\linewidth}@{}}
    \toprule
    Pool & $M$ & Params (B) & Post-training \\
    \midrule
    Opus parents     & 8 & 4.0--31.2 (11.5)
                     & 8 instruction- or reasoning-tuned \\
    Opus descendants & 8 & 4.0--31.2 (11.5)
                     & 8 distilled from reasoning traces attributed to
                       Claude Opus~4.5 \\
    \midrule
    DeepSeek-R1 parents     & 6 & 1.5--32.8 (8.1)
                     & 6 base, not safety-aligned \\
    DeepSeek-R1 descendants & 6 & 1.8--32.8 (8.1)
                     & 6 distilled from DeepSeek-R1\\
    \bottomrule
  \end{tabular}
\end{table}

\end{document}